\documentclass[11pt,letterpaper]{article}
\usepackage[margin=1in]{geometry}

\usepackage{amsmath,amsfonts,amssymb,amsthm,mathtools}
\usepackage{algorithm,algorithmic}
\usepackage{enumitem}
\usepackage{xspace,xcolor}
\usepackage{comment}
\usepackage{subcaption}
\usepackage{tikz}
\usetikzlibrary{positioning}
\tikzstyle{agent}=[circle,draw,minimum size=0.6cm,inner sep=0pt]
\tikzstyle{envyEdge}=[-latex,line width=0.03cm]
\tikzstyle{eqEdge}=[-latex,line width=0.03cm]
\tikzset{node distance=2cm,on grid}
\usepackage[normalem]{ulem}
\usepackage[square]{natbib} 

\newtheorem{theorem}{Theorem}[section]
\newtheorem{lemma}[theorem]{Lemma}

\newtheorem{corollary}[theorem]{Corollary}

\theoremstyle{definition}
\newtheorem{definition}[theorem]{Definition}
\newtheorem{example}[theorem]{Example}

\newif\ifsubmission
\submissionfalse 

\newcommand{\longEFM}{envy-freeness for mixed goods\xspace}
\newcommand{\eferror}{\gamma\xspace}
\newcommand{\envyArrow}{\xrightarrow{\text{ENVY}}}
\newcommand{\eqArrow}{\xrightarrow{\text{EQ}}}
\newcommand{\epsEnvyArrow}{\xrightarrow{\epsilon\text{-ENVY}}}
\newcommand{\epsEqArrow}{\xrightarrow{\epsilon\text{-EQ}}}
\newcommand{\perfect}{\textsc{PerfectAlloc}\xspace}
\newcommand{\epsEF}[1]{#1\textsc{-EFAlloc}}

\title{Fair Division of Mixed Divisible and Indivisible Goods\thanks{A preliminary version appeared in Proceedings of the 34th AAAI Conference on Artificial Intelligence (AAAI)~\citep{BeiLiLi20}.
Compared to the conference version, this journal version fixes a bug in the proof of Theorem~\ref{thm:existenceOfEFhalf} and includes a new section (Section~\ref{sec:EFM_PO}) that discusses how to combine the newly proposed fairness notion together with economic efficiency considerations.}}

\author{
	Xiaohui Bei\\
	Nanyang Technological University\\
	\texttt{xhbei@ntu.edu.sg}\\
	\and
	Zihao Li\\
	Nanyang Technological University\\
	\texttt{zihao004@e.ntu.edu.sg}\\
	\and
	Jinyan Liu\\
	Beijing Institute of Technology\\
	\texttt{jyliu@cs.hku.hk}\\
	\and
	Shengxin Liu\\
	Nanyang Technological University\\
	\texttt{sxliu@ntu.edu.sg}\\
	\and
	Xinhang Lu\\
	Nanyang Technological University\\
	\texttt{xinhang001@e.ntu.edu.sg}\\
}

\date{}

\begin{document}
\maketitle

\begin{abstract}
We study the problem of fair division when the set of resources contains both divisible and indivisible goods.
Classic fairness notions such as envy-freeness (EF) and envy-freeness up to one good (EF1) cannot be directly applied to this mixed goods setting.
In this work, we propose a new fairness notion, \emph{\longEFM (EFM)}, which is a direct generalization of both EF and EF1 to the mixed goods setting.
We prove that an EFM allocation always exists for any number of agents with additive valuations.
We also propose efficient algorithms to compute an EFM allocation for two agents with general {additive} valuations and for $n$ agents with piecewise linear valuations over the divisible goods.
Finally, we relax the {envy-freeness} requirement, instead asking for \emph{$\epsilon$-\longEFM ($\epsilon$-EFM)}, and present an efficient algorithm that finds an $\epsilon$-EFM allocation.
\end{abstract}

\newpage

\section{Introduction}
Fair division studies the allocation of scarce resources among interested agents, with the objective of finding an allocation that is fair to all participants involved.
Initiated by~\citet{Steinhaus48}, the study of fair division has since been attracting interest from various disciplines for decades, including among others, mathematics, economics, and computer science~\citep{BramsTa96,RobertsonWe98,Moulin03,Thomson16,Moulin19}.

The literature of fair division can be divided into two classes, categorized by the type of the resources to be allocated.
The first class assumes the resource to be heterogeneous and infinitely divisible.
The corresponding problem is commonly known as \emph{cake cutting}.
One of the most prominent fairness notions in this setting is \emph{envy-freeness (EF)}.
An allocation is said to be envy-free if each agent {weakly} prefers her own bundle to any other bundle in the allocation.
An envy-free allocation with divisible resources always exists~\citep{Alon87,Su99} and can be found via a discrete and bounded protocol~\citep{AzizMa16}.

The second class considers the fair allocation of \emph{indivisible} goods.
Note that {an envy-free allocation} may fail to exist in the indivisible goods setting.\footnote{Consider the case where there are two agents but only a single valuable good to be allocated.}
To circumvent this problem, relaxations of envy-freeness have been studied.
One of the commonly considered relaxations is \emph{envy-freeness up to one good (EF1)}~\citep{LiptonMaMo04,Budish11}.
An allocation is said to satisfy EF1 if it is possible to eliminate any envy one agent has towards another agent by removing some good from the latter's bundle.
An EF1 allocation with indivisible goods always exists and can be found in polynomial time~\citep{LiptonMaMo04,CaragiannisKuMo19}.

The vast majority of the fair division literature assumes that the resources either are completely divisible, or consist of only indivisible goods.
However, this is not always the case in many real-world scenarios.
In inheritance division, for example, the inheritances to be divided among the heirs may contain divisible goods such as land and money, as well as indivisible goods such as houses, cars, and artworks.
What fairness notion should one adopt when dividing such mixed type of resources?
While EF and EF1 both work well in their respective settings, neither of them can be directly applied to this more general scenario. On the one hand, an EF allocation may not exist, when, for example, all goods are indivisible.
On the other hand, the EF1 notion in the mixed goods setting, when interpreted as that each agent does not envy another agent after removing at most one indivisible good from the latter agent's bundle, may also produce unfair allocations.
Consider the example where there is an indivisible good and a cake that are both equally valued by two agents.
The allocation that divides the cake in half and then {gives} the indivisible good to one of the agents is EF1  but is arguably unfair.
Another tempting solution is to divide the divisible and indivisible resources using EF and EF1 protocols separately and independently, and then combine the two allocations together.
This approach, however, also has problems.
Consider a simple example where two agents need to divide a cake \emph{and} an indivisible item. EF1 requires to allocate the indivisible item to one of the agent, say agent 1 for example.
However, if we then divide the cake using an arbitrary EF allocation, the overall allocation might be unfair to agent 2 who does not receive the indivisible item.
In fact, if the whole cake is valued less than the item, it would make more sense to allocate the cake entirely to agent 2.
When the cake is valued more than the item, it is still a fairer solution to allocate more cake to agent 2 in order to compensate her disadvantage in the indivisible resource allocation.
This demonstrates that it is not straightforward to generalize EF and EF1 to the mixed goods setting.
Dividing mixed types of resources calls for a new fairness notion that could unify EF and EF1 together to the new setting in a natural and non-trivial way.


\subsection{Our Results}
In this work, we initiate the study of fair division with mixed types of resources.
More specifically, we propose a new fairness notion, denoted as \emph{\longEFM} (or \emph{EFM} for short), that naturally combines EF and EF1 together and works for the setting where the set of resources may contain both divisible and indivisible goods.
Intuitively, EFM requires that for each agent, if her allocation consists of only indivisible items, then {other agents} will compare their bundles to hers using the EF1 criterion; but if this agent's bundle contains \emph{any} positive amount of divisible resources, {other agents} will compare their bundles to hers using the stricter EF condition.
This definition generalizes both EF and EF1 to the mixed goods setting and strikes a natural balance between the two fairness notions.

In Section~\ref{sec:EFMExistence}, we first show that with mixed types of goods, an EFM allocation always exists for any number of agents with additive valuations.
Our proof is constructive and gives an algorithm for computing such an EFM allocation.
The algorithm requires an oracle for computing a \emph{perfect allocation in cake cutting} and can compute an EFM allocation in a polynomial number of steps.
In addition, in Section~\ref{sec:EFMSpecial}, we present two algorithms that could compute an EFM allocation for two special cases without using the perfect allocation oracle: (1) two agents with general {additive} valuations in the Robertson-Webb model, and (2) any number of agents with piecewise linear valuation functions.

While it is still unclear to us whether in general an EFM allocation can be computed in {a finite number of steps} in the Robertson-Webb model, in Section~\ref{sec:approxEFM}, we turn our attention to approximations and define the notion of $\epsilon$-EFM.
We then give an algorithm to compute an $\epsilon$-EFM allocation in the Robertson-Webb model with running time \emph{polynomial} in the number of agents $n$, the number of indivisible goods $m$, and $1/\epsilon$, and query complexity \emph{polynomial} in $n$ and $1/\epsilon$.
{We note that this algorithm does not require a perfect allocation oracle.}
This is an appealing result in particular due to its polynomial running time complexity.
A bounded exact EFM protocol, even if exists, is likely to require a large number of queries and cuts. This is because in the special case when resources are all divisible, EFM reduces to EF in cake cutting, for which the best known protocol~\citep{AzizMa16} has a very high query complexity (a tower of exponents of $n$).
This result shows that if one is willing to allow a small margin of errors, such an allocation could be found much more efficiently.

Finally, in Section~\ref{sec:EFM_PO} we discuss EFM in conjunction with efficiency considerations.
In particular, to one's surprise, we show that EFM and Pareto optimality (PO) are incompatible.
We also propose a weaker version of EFM and discuss the possibilities and difficulties in combining it with PO.

\subsection{Related Work}
As we mentioned, most previous works in fair division are from two categories based on whether the resources to be allocated are divisible or indivisible.

When the resources are divisible, the existence of an envy-free allocation is guaranteed~\citep{Liapounoff40,DubinsSp61},
even with only $n-1$ cuts~\citep{Stromquist80,Su99}.
\citet{BramsTa95} gave the first finite (but unbounded) envy-free protocol for any number of agents.
Recently, \citet{AzizMa16-STOC} gave the first bounded protocol for computing an envy-free allocation with four agents and their follow-up work extended the result to any number of agents~\citep{AzizMa16}.
Besides envy-freeness, other classic fairness notions include \emph{proportionality} and \emph{equitability}, both of which have been studied extensively~\citep{DubinsSp61,EvenPa84,EdmondsPr06,CechlarovaPi12,ProcacciaWa17}.

When the resources are indivisible, none of the aforementioned fairness notions is guaranteed to exist, thus relaxations are considered.
Among other notions, these include envy-freeness up to one good (EF1), envy-freeness up to any good (EFX), maximin share (MMS), etc.~\citep{LiptonMaMo04,Budish11,CaragiannisKuMo19}.
An EF1 allocation always exists and can be efficiently computed~\citep{LiptonMaMo04,CaragiannisKuMo19}.
However, the existence of an EFX allocation is still open~\citep{Procaccia20}, except for several special cases~\citep{PlautRo20,ChaudhuryGaMe20,AmanatidisBiFi20}.
As for MMS, an MMS allocation may not always exist; however, an approximation of MMS always exists and can be efficiently computed~\citep{KurokawaPrWa18,AmanatidisMaNi17,GhodsiHaSe18,GargTa20}. A recent paper by \citet{BeiLiLu20} also studied the existence, approximation and computation of MMS allocations in the mixed goods setting.

In addition, several works studied fair division with the assumption that resources can be shared among agents.
The adjusted-winner (AW) procedure proposed by~\citet{BramsTa96} ensures that at most one good must be split in a fair and (economically) efficient division between two agents.
\citet{SandomirskiySe19} focused on obtaining a fair and efficient division with minimum number of objects shared between two or more agents.
\citet{Rubchinsky10} considered the fair division problem between two agents with both divisible and indivisible items, and introduced three fairness notions with computationally efficient algorithms for finding them.
All of the works discussed above assumed that divisible items are \emph{homogeneous}.

Several other works studied the allocation of both indivisible goods and money, with the goal of finding envy-free allocations~\citep{Maskin87,AlkanDeGa91,Klijn00,MeertensPoRe02,HalpernSh19,BrustleDiNa20}.
Money can be viewed as a homogeneous divisible good which is valued the same across all agents.
In our work, we consider a more general setting with heterogeneous divisible goods.
Moreover, these works focused on finding envy-free allocations with the help of {a} sufficient amount of money, which is again different from our goal in the sense that our method could also be used even in cases where the money is insufficient.

\section{Preliminaries}
We consider a resource allocation setting with both divisible and indivisible goods (mixed goods for short).
Denote by $N = \{1, 2, \dots, n\}$ the set of agents, $M = \{1, 2, \dots, m\}$ the set of indivisible goods, and $D = \{D_1, D_2, \ldots, D_\ell\}$ the set of $\ell$ heterogeneous divisible goods or \emph{cakes}.
Since the fairness notion we propose below does not distinguish pieces from different cakes, without loss of generality, we assume each cake $D_i$ is represented by the interval $[\frac{i-1}{\ell}, \frac{i}{\ell}]$,\footnote{We assume that agents' valuation functions over the cakes are non-atomic. Thus we can view two consecutive cakes as disjoint even if they intersect at one boundary point.} and use a single cake $C = [0, 1]$ to represent the union of all cakes.\footnote{Sometimes we will use an arbitrary interval $[a, b]$ to denote the resource for simplicity; this can be easily normalized back to $[0, 1]$.}

Each agent $i$ has a non-negative utility $u_i(g)$ for each indivisible good $g \in M$.
Agents' utilities for subsets of indivisible goods are additive, meaning that $u_i(M') = \sum_{g \in M'} u_i(g)$ for each agent $i$ and subset of goods $M' \subseteq M$.
Each agent $i$ also has a density function $f_{i} \colon [0, 1] \to \mathbb{R}^+ \cup \{0\}$, which captures how the agent values different parts of the cake.
The value of agent $i$ over a finite union of intervals $S \subseteq [0, 1]$ is defined as $u_i(S) = \int_{S} f_{i}\ dx$.

Denote by $\mathcal{M} = (M_1, M_2, \dots, M_n)$ the partition of $M$ into bundles such that agent $i$ receives bundle $M_i$.
Denote by $\mathcal{C} = (C_1, C_2, \dots, C_n)$ the division of cake $C$ such that $C_i \cap C_j = \emptyset$ and agent $i$ receives $C_i$, a union of finitely many intervals.
An \emph{allocation} of the mixed goods is defined as $\mathcal{A} = (A_1, A_2, \dots, A_n)$ where $A_i = M_i \cup C_i$ is the \emph{bundle} allocated to agent $i$.
Agent $i$'s utility for the allocation is then defined as $u_i(A_i) = u_i(M_i) + u_i(C_i)$.
We assume without loss of generality that agents' utilities are normalized to 1, i.e., $u_i(M \cup C) = 1$ for all $i \in N$.

Next, we define the fairness notions used in this paper.

\begin{definition}[EF]
An allocation $\mathcal{A}$ is said to satisfy \emph{envy-freeness (EF)} if for any agents $i, j \in N$, $u_i(A_i) \geq u_i(A_j)$.
\end{definition}

\begin{definition}[EF1]
With indivisible goods, an allocation $\mathcal{A}$ is said to satisfy \emph{envy-freeness up to one good (EF1)} if for any agents $i, j \in N$ where $A_j \neq \emptyset$, there exists $g \in A_j$ such that $u_i(A_i) \geq u_i(A_j \setminus \{g\})$.
\end{definition}

Neither EF nor EF1 alone is a suitable definition for mixed goods.
In this paper we introduce the following new fairness notion.


\begin{definition}[EFM]\label{def:EFhalf}
An allocation $\mathcal{A}$ is said to satisfy \emph{\longEFM (EFM)} if for any agents $i, j \in N$,
\begin{itemize}
\item if agent $j$'s bundle consists of only indivisible goods, there exists $g \in A_j$ such that $u_i(A_i) \geq u_i(A_j \setminus \{g\})$;
\item otherwise, $u_i(A_i) \geq u_i(A_j)$.
\end{itemize}
\end{definition}

It is easy to see that when the goods are all divisible, EFM reduces to EF; when goods are all indivisible, EFM reduces to EF1.
Therefore EFM is a natural generalization of both EF and EF1 to the mixed goods setting.

Next, we define $\epsilon$-EFM which is a relaxation of EFM.
Note that this definition only relaxes the EF condition for the divisible goods; the EF1 condition is not relaxed.

\begin{definition}[$\epsilon$-EFM]\label{def:epsEFhalf}
An allocation $\mathcal{A}$ is said to satisfy \emph{$\epsilon$-\longEFM ($\epsilon$-EFM)} if for any agents $i, j \in N$,
\begin{itemize}
\item if agent $j$'s bundle consists of only indivisible goods, there exists $g \in A_j$ such that $u_i(A_i) \geq u_i(A_j \setminus \{g\})$;
\item otherwise, $u_i(A_i) \geq u_i(A_j) - \epsilon$.
\end{itemize}
\end{definition}

Finally, we describe the Robertson-Webb (RW) query model~\citep{RobertsonWe98}, which is a standard model in cake cutting. In this model, an algorithm is allowed to interact with the agents via two types of queries:
\begin{itemize}
\item Evaluation: An evaluation query of agent $i$ on $[x, y]$ returns $u_i([x, y])$.
\item Cut: A cut query of $\beta$ for agent $i$ from $x$ returns a point $y$ such that $u_i([x, y]) = \beta$.
\end{itemize}


\section{EFM: Existence}\label{sec:EFMExistence}
Although EFM is a natural generalization of both EF and EF1, it is not straightforward whether an EFM allocation would always exist with mixed goods.
In this section, we prove through a constructive algorithm that with mixed goods and any number of agents, an EFM allocation always exists.

We first give some definitions which will be helpful for our algorithm and proofs.

\paragraph{Perfect Allocation}
Our algorithm will utilize the concept of \emph{perfect allocation} in cake cutting.

\begin{definition}[Perfect allocation]
A partition $\mathcal{C} = (C_1, C_2, \dots, C_k)$ of cake $C$ is said to be \emph{perfect} if for all $i \in N, j \in [k]$, $u_i(C_j) = u_i(C)/k$.
\end{definition}

Intuitively, a perfect allocation in cake cutting divides the cake into $k$ pieces, such that every agent in $N$ values these $k$ pieces equally.
It is known that a perfect allocation always exists for any number of agents and any $k$~\citep{Alon87}.
In the following, we will assume that our algorithm is equipped with an oracle $\perfect(C, k, N)$ that could return us a perfect allocation for any $k$ and cake $C$ among all agents in $N$.



\paragraph{Envy Graph and Addable Set}
We also make use of the \emph{envy graph} to capture the envy relation among agents in an allocation.

\begin{definition}[Envy graph]
Given an allocation $\mathcal{A}$, its corresponding \emph{envy graph} $G = (N, E_{\text{envy}} \cup E_{\text{eq}})$ is a directed graph, where each vertex represents an agent, and $E_{\text{envy}}$ and $E_{\text{eq}}$ consist of the following two types of edges, respectively:
\begin{itemize}
\item Envy edge: $i \envyArrow j$ if $u_i(A_i) < u_i(A_j)$;
\item Equality edge: $i \eqArrow j$ if $u_i(A_i) = u_i(A_j)$.
\end{itemize}
\end{definition}

Moreover, a cycle in an envy graph is called an \emph{envy cycle} if it contains at least one envy edge.
{The concepts of envy edge and equality edge were also used in~\citep{Klijn00,LiptonMaMo04}.}

Given an envy graph, we then define another useful concept called \emph{addable set} which corresponds to a specific set of agents.

\begin{definition}[Addable set]\label{def:addableSet}
Given an envy graph, a non-empty set of agents $S \subseteq N$ forms an \emph{addable set} if,
\begin{itemize}
\item there is no envy edge between any pair of agents in $S$;
\item there exists neither an envy edge nor an equality edge from any agent in $N \setminus S$ to any agent in $S$.
\end{itemize}
\end{definition}

Moreover, an addable set $S \subseteq N$ is called a \emph{maximal} addable set if there does not exist any other addable set $S' \subseteq N$ such that $S \subset S'$.
The following lemma shows the uniqueness of the maximal addable set in an envy graph.

\begin{lemma}\label{lem:max_addable}
Given an envy graph, the maximal addable set, if exists, is unique.
Moreover, we can find it or decide that none exists in $O(n^3)$ time.
\end{lemma}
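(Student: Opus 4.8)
The plan is to establish the lemma in two parts: first, uniqueness of the maximal addable set, which I would derive from the structural fact that \emph{the family of addable sets is closed under union}; and second, an elimination procedure that computes this unique set within the stated time bound. Once closure under union is known, the union of all addable sets is itself addable and is, by construction, the unique maximal one, so uniqueness reduces entirely to the closure property.

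For the closure property I would take two addable sets $S_1$ and $S_2$ and verify that $S_1 \cup S_2$ satisfies both defining conditions of Definition~\ref{def:addableSet}. The second condition is routine: any $k \in N \setminus (S_1 \cup S_2)$ lies outside both $S_1$ and $S_2$, so for any $i \in S_1 \cup S_2$ the absence of an envy or equality edge $k \to i$ follows by applying the second condition to whichever of $S_1, S_2$ contains $i$. The delicate case---and the main obstacle---is the first condition for a \emph{cross pair} $i \in S_1 \setminus S_2$ and $j \in S_2 \setminus S_1$, since neither set alone controls this pair. Here I would argue as follows: since $j \notin S_1$ and $i \in S_1$, the second condition for $S_1$ forbids an edge $j \to i$, giving $u_j(A_j) > u_j(A_i)$; symmetrically, since $i \notin S_2$ and $j \in S_2$, the second condition for $S_2$ gives $u_i(A_i) > u_i(A_j)$. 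In particular there is no envy edge in either direction between $i$ and $j$, which is exactly what the first condition requires. Combined with the within-$S_1$ and within-$S_2$ cases (immediate from each set being addable), this shows $S_1 \cup S_2$ is addable.

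For the algorithmic part I would use an elimination procedure. Observe first that any agent $i$ that is the head of an envy edge $k \envyArrow i$ cannot belong to any addable set: if $i \in S$ then $k \in S$ violates the first condition while $k \notin S$ violates the second. This suggests maintaining a candidate set $S$, initialized to $N$, and repeatedly removing any $i \in S$ for which either (a) some $j \in S$ has an envy edge $j \envyArrow i$, or (b) some already-removed $k \in N \setminus S$ has an envy or equality edge $k \to i$. The correctness rests on the invariant that every addable set is contained in the current $S$: I would prove by induction on removals that an agent removed under rule (a) or (b) cannot lie in any addable set $A \subseteq S$, using that $A \subseteq S$ places any removed or outside vertex in $N \setminus A$, hence forbidden by the second condition (for rule (b)) or the first condition (for rule (a)). Upon termination no removal rule applies, so the surviving set $S$ violates neither condition and is therefore addable; by the invariant it contains every addable set, so it is the maximal one. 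Finally, the running time is $O(n^3)$: the envy graph has $O(n^2)$ edges, each sweep testing rules (a) and (b) over all pairs costs $O(n^2)$, and at most $n$ agents are removed, so at most $n+1$ sweeps suffice.
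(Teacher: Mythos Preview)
Your proposal is correct and follows essentially the same approach as the paper. Both establish uniqueness via closure of addable sets under union (you argue it directly, the paper by contradiction on two distinct maximal sets), and both compute the maximal set as the complement of the vertices reachable, via envy or equality edges, from heads of envy edges---the paper by explicit forward reachability from each such head, you by an equivalent iterative elimination that removes exactly the same vertices.
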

\begin{proof}
Suppose, to the contrary, that there exist two distinct maximal addable sets $S_1$ and $S_2$ in the given envy graph.
We will show that $S_1 \cup S_2$ is also an addable set which contradicts the maximality of $S_1$ and $S_2$.

First it is easy to see that there exists neither an envy edge nor an equality edge from any agent in $N \setminus (S_1 \cup S_2)$ to agents in $S_1 \cup S_2$ since, otherwise, either $S_1$ or $S_2$ is not an addable set.

We next argue that there is no envy edge between any pair of agents in $S_1 \cup S_2$.
Clearly, according to Definition~\ref{def:addableSet}, there is no envy edge within each of $S_1$ and $S_2$.
The envy edges between $S_1$ and $S_2$ also cannot exist because there are no envy edges coming from outside of $S_1$ or $S_2$ into any of them.
Thus, $S_1 \cup S_2$ is also an addable set.

Now, we show how to find the unique maximal addable set or decide its non-existence in $O(n^3)$ time: for each $j$ which has an incoming envy edge, let $R_j$ be the collection of vertices (including $j$) that are reachable by $j$ via the union of envy edges and equality edges, and let $S = N \setminus \bigcup_j {R_j}$.
We will show that an addable set does not exist in the envy graph if $S = \emptyset$. Otherwise, $S$ is the unique maximal addable set.
First, $S$ is an addable set because any agent in $S$ does not have any incoming envy edge and is not reachable via the union of envy edges and equality edges from any other agent with an incoming envy edge.
In addition, $S$ is maximal because any agent in $\bigcup_j {R_j}$ cannot be in any addable set.
Such $S$ can be found in $O(n^3)$ time because it takes $O(n)$ time to check if an agent has an incoming edge, and for any agent $j$ who has an incoming envy edge, it then takes $O(n^2)$ time to construct $R_j$ via, for example, breadth-first search (BFS).
\end{proof}

Intuitively, agents in the addable set $S$ can be allocated some cake without creating new envy, since each agent in $N \setminus S$ values her own bundle strictly more than the bundles of agents in $S$.
Our next result characterizes the relation between the addable set and the envy cycle.

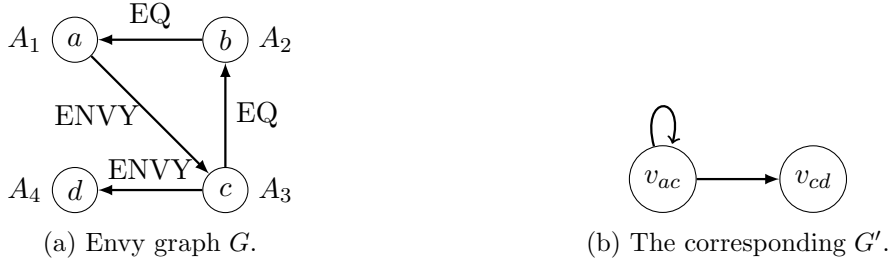
\begin{figure}[t]
\centering
\begin{subfigure}[t]{.45\linewidth}
\centering
\begin{tikzpicture}
\node[agent,label=left:{$A_1$}] (a) {$a$};
\node[agent,label=right:{$A_2$}] (b) [right=of a] {$b$};
\node[agent,label=right:{$A_3$}] (c) [below=of b] {$c$};
\node[agent,label=left:{$A_4$}] (d) [below=of a] {$d$};
\path[envyEdge] (a) edge node[left]{ENVY} (c);
\path[eqEdge] (b) edge node[above]{EQ} (a);
\path[envyEdge] (c) edge node[above]{ENVY} (d);
\path[eqEdge] (c) edge node[right]{EQ} (b);
\end{tikzpicture}
\subcaption{Envy graph $G$.}
\label{fig:G}
\end{subfigure}
~
\begin{subfigure}[t]{.45\linewidth}
\centering
\begin{tikzpicture}
\node[draw,circle] (vac) {$v_{ac}$};
\node[draw,circle] (vcd) [right=of vac] {$v_{cd}$};
\path[-latex,line width=0.03cm] (vac) edge (vcd);
\path[-latex,line width=0.03cm] (vac) edge[loop above] (vac);
\end{tikzpicture}
\subcaption{The corresponding $G'$.}
\label{fig:GPrime}
\end{subfigure}
\caption{Figure~\ref{fig:G} shows an envy graph $G$ with four vertices (agents) $a, b, c, d$.
The bundle each agent gets is labelled with $A_1, A_2, A_3, A_4$ beside. We show in Figure~\ref{fig:GPrime} its corresponding $G'$. In this example, $G$ has an envy cycle (involving vertices $a, b, c$) but no addable set.}
\label{fig:G-envyCycle}
\end{figure}

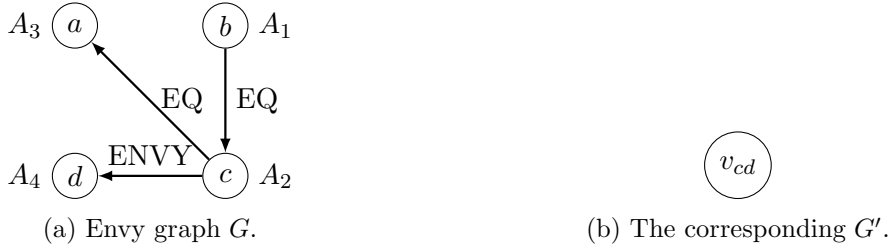
\begin{figure}[t]
\centering
\begin{subfigure}[t]{.45\linewidth}
\centering
\begin{tikzpicture}
\node[agent,label=left:{$A_3$}] (a) {$a$};
\node[agent,label=right:{$A_1$}] (b) [right=of a] {$b$};
\node[agent,label=right:{$A_2$}] (c) [below=of b] {$c$};
\node[agent,label=left:{$A_4$}] (d) [below=of a] {$d$};
\path[eqEdge] (b) edge node[right]{EQ} (c);
\path[envyEdge] (c) edge node[above]{ENVY} (d);
\path[eqEdge] (c) edge node[right]{EQ} (a);
\end{tikzpicture}
\subcaption{Envy graph $G$.}
\label{fig:GG}
\end{subfigure}
~
\begin{subfigure}[t]{.45\linewidth}
\centering
\begin{tikzpicture}
\node[draw,circle] (vcd) {$v_{cd}$};
\end{tikzpicture}
\subcaption{The corresponding $G'$.}
\label{fig:GGPrime}
\end{subfigure}
\caption{After rotating the bundles along the envy cycle in Figure~\ref{fig:G}, we obtain the envy graph in Figure~\ref{fig:GG}.
The corresponding $G'$ is shown in Figure~\ref{fig:GGPrime}.
In this example, $G$ has addable sets {$\{b\}$, $\{b, c\}$ and $\{a, b, c\}$} but no envy cycle.}
\label{fig:G-addableSet}
\end{figure}

\begin{lemma}\label{lem:addableSet-envyCycle}
Any envy graph $G = (N, E_{\text{envy}} \cup E_{\text{eq}})$ that does not have any envy cycle must have an addable set.
\end{lemma}
\begin{proof}
We assume without loss of generality that $E_{\text{envy}} \neq \emptyset$, since otherwise $N$ itself is an addable set.
Now, we construct graph $G' = (N', E')$ from $G$ as follows.
Each envy edge $i \envyArrow j$ in $G$ corresponds to a vertex $v_{ij}$ in $G'$.
For two envy edges $i \envyArrow j$ and $i' \envyArrow j'$ in $G$, if there exists a path from $j$ to $i'$, we construct an edge $v_{ij} \to v_{i'j'}$ in $G'$.
Note that, if there is an envy edge $i \envyArrow j$ and a path from $j$ to $i$ in $G$, there will be a self-loop $v_{ij} \to v_{ij}$ in $G'$.
We illustrate this transformation using two examples in Figures~\ref{fig:G-envyCycle} and~\ref{fig:G-addableSet}.

It is easy to see that a cycle in $G'$ implies an envy cycle in $G$.
Thus, by the assumption that there is no envy cycle in $G$, $G'$ must be acyclic.
Then there must exist a vertex $v_{ij} \in N'$ which is not reachable by any other vertices in $G'$.
Because $v_{ij}$ corresponds to the envy edge $i \envyArrow j$ in $G$,
since $v_{ij}$ cannot be reached by any vertices in $G'$, the vertex $i$ is also not reachable by any $j'$ which is pointed by an envy edge.
We note that, however, this vertex $i$ may be reachable by other vertices via only equality edges.
Thus, we need to not only include agent $i$ in the addable set but also those agents who are able to reach $i$ via equality edges.

Let $S$ be the set containing agent $i$ and all other agents who can reach $i$ in the envy graph $G$ via equality edges.
In the following, we show that $S$ is an addable set.
First, $S$ is non-empty because it at least contains agent $i$.
Second, by our construction, there is no envy edge between any pair of agents in $S$.
Third, recall that in envy graph $G$, agent $i$ is not reachable by any $j'$ which is pointed by an envy edge; thus, $S$ is also not pointed by any envy edge.
Last, $S$ is also not pointed by any equality edge by our construction of $S$.
Therefore, according to Definition~\ref{def:addableSet}, $S$ must be an addable set.
\end{proof}

\subsection{The Algorithm}
\begin{algorithm}[ht!]
\caption{EFM Algorithm}
\label{alg:EFhalf}
\begin{algorithmic}[1]
\REQUIRE Agents $N$, indivisible goods $M$ and cake $C$.
\STATE Find an arbitrary EF1 allocation $(A_1, A_2, \dots, A_n)$ of $M$ to $n$ agents. \label{EFhalfALG-EF1Allocation}
\STATE Construct an envy graph $G = (N, E_{\text{envy}} \cup E_{\text{eq}})$ accordingly. \label{EFhalfALG-initialEnvyGraph}

\WHILE {$C \neq \emptyset$} \label{EFhalfALG-remainingCake}
	\IF {there exists an addable set in $G$} \label{EFhalfALG-addCakeBegin}
	\STATE // cake-adding phase
	\STATE Let $S$ be the maximal addable set. \label{EFhalfALG-maximalAddable}
		\IF {$S = N$} \label{EFhalfALG-EFbegin}
			\STATE Find an EF allocation $(C_1, C_2, \dots, C_n)$ of $C$. \label{EFhalfALG-EF}
			\STATE $C \leftarrow \emptyset$
			\STATE Add $C_i$ to bundle $A_i$ for all $i \in N$. \label{EFhalfALG-EFUpdate}
		\ELSE \label{EFhalfALG-SneqN}
			\STATE $\delta_i \leftarrow \min_{j \in S} (u_i(A_i) - u_i(A_j))$ for each $i \in N \setminus S$. \label{EFhalfALG-deltai}
			\IF {$u_i(C) \leq |S| \cdot \delta_i$ holds for each $i \in N \setminus S$} \label{EFhalfALG-chooseCakeBegin}
				\STATE $C' \leftarrow C, C \leftarrow \emptyset$
			\ELSE \label{EFhalfALG-cakeBegin}
				\STATE Suppose w.l.o.g.~that $C = [a, b]$. For each agent $i \in N \setminus S$, if $u_i([a, b]) \geq |S| \cdot \delta_i$, let $x_i$ be a point such that $u_i([a, x_i]) = |S| \cdot \delta_i$; otherwise, let $x_i = b$. \label{EFhalfALG-xi}
				\STATE $i^* \leftarrow \arg\min_{i \in N \setminus S} x_i$ \label{EFhalfALG-istar}
				\STATE $C' \leftarrow [a, x_{i^*}], C \leftarrow C \setminus C'$ \label{EFhalfALG-cakeEnd}
			\ENDIF \label{EFhalfALG-chooseCakeEnd}
			\STATE Let $(C_1, C_2, \dots, C_k) = \perfect(C', k, N)$ where $k = |S|$. \label{EFhalfALG-addCake}
			\STATE Add $C_{i}$ to the bundle of the $i$-th agent in $S$. \label{EFhalfALG-PerfectUpdate}
			\STATE Update envy graph $G$ accordingly.
		\ENDIF \label{EFhalfALG-addCakeEnd}
	\ELSE \label{EFhalfALG-envyCycleBegin}
		\STATE // envy-cycle-elimination phase
		\STATE Let $T$ be an envy cycle in envy graph $G$.
		\STATE For each agent $j \in T$, give agent $j$'s whole bundle to agent $i$ who points to her in $T$. \label{EFhalfALG-cycleElimination}
		\STATE Update envy graph $G$ accordingly.
	\ENDIF \label{EFhalfALG-envyCycleEnd}
\ENDWHILE \label{EFhalfALG-remainingEnd}

\RETURN $(A_1, A_2, \dots, A_n)$ \label{EFhalfALG-lastReturn}
\end{algorithmic}
\end{algorithm}

The complete algorithm to compute an EFM allocation is shown in Algorithm~\ref{alg:EFhalf}.

In general, our algorithm always maintains a partial allocation that is EFM.
Then, we repeatedly and carefully add resources to the partial allocation, until all resources are allocated.
We start with an EF1 allocation of only indivisible goods to all agents in Step~\ref{EFhalfALG-EF1Allocation}, and construct the corresponding envy graph in Step \ref{EFhalfALG-initialEnvyGraph}.
Then, our algorithm executes in rounds (Steps~\ref{EFhalfALG-remainingCake}-\ref{EFhalfALG-remainingEnd}).
In each round, we try to distribute some cake to the partial allocation while ensuring the partial allocation to be EFM.
Such distribution needs to be done carefully because once an agent is allocated with a positive amount of cake, the fairness condition with regard to her bundle changes from EF1 to EF, which is more demanding.
We repeat the process until the whole cake is allocated.

In each round of Algorithm~\ref{alg:EFhalf}, depending on whether there is an addable set that can be given some cake in Step~\ref{EFhalfALG-addCakeBegin}, we execute either the \emph{cake-adding phase} (Steps~\ref{EFhalfALG-addCakeBegin}-\ref{EFhalfALG-addCakeEnd})
or the \emph{envy-cycle-elimination phase} (Steps~\ref{EFhalfALG-envyCycleBegin}-\ref{EFhalfALG-envyCycleEnd}).

\begin{itemize}
\item In the cake-adding phase, we have a maximal addable set $S$.
By its definition, each agent in $N \setminus S$ values her own bundle strictly more than the bundles of agents in $S$.
Thus there is room to allocate some cake $C'$ to agents in $S$.
We carefully select $C'$ to be allocated to $S$ such that it does not create any new envy among the agents.
To achieve this, we choose a piece of cake $C' \subseteq C$ to be \emph{perfectly} allocated to $S$ in Steps~\ref{EFhalfALG-deltai}-\ref{EFhalfALG-chooseCakeEnd} so that no agent in $N$ will envy agents in $S$ after distributing $C'$ in Steps~\ref{EFhalfALG-addCake}-\ref{EFhalfALG-PerfectUpdate}.
More specifically, for each agent $i \in N \setminus S$, we determine in Step~\ref{EFhalfALG-deltai} the largest value $\delta_i$ to be added to any agent in $S$ such that $i$ would still not envy any agent in $S$.
Then, the way we decide $x_{i^*}$ in Steps~\ref{EFhalfALG-istar}-\ref{EFhalfALG-cakeEnd} ensures that for all agents $i \in N \setminus S$, $v_i([a, x_{i^*}]) \leq |S|\cdot \delta_i$.
Next, in Step~\ref{EFhalfALG-addCake}, cake $C' = [a, x_{i^*}]$ is divided into $|S|$ pieces that are valued equally by all agents in $N$. This is to ensure that no agent $i \in N \setminus S$ values any piece more than $\delta_i$.

\item In the envy-cycle-elimination phase, i.e., when there does not exist any addable set, we show that in this case there must exist an envy cycle $T$ in the current envy graph.
We can then apply the envy-cycle-elimination technique to reduce some existing envy from the allocation by rearranging the bundles along $T$.
More specifically, for each agent $j \in T$, we give agent $j$'s bundle to agent $i$ who points to her in $T$ (shown in Step~\ref{EFhalfALG-cycleElimination}).
\end{itemize}

We remark that when all goods are indivisible, our algorithm performs Steps~\ref{EFhalfALG-EF1Allocation}-\ref{EFhalfALG-initialEnvyGraph} and terminates with an EF1 allocation (which is also EFM).
When the whole good is a divisible cake, the algorithm goes directly to Step~\ref{EFhalfALG-EF} and ends with an EF allocation of the cake, which is again EFM.

In the following we prove the correctness of this algorithm and analyze its running time.

\subsection{Analysis}
Our main result for the EFM allocation is as follows:

\begin{theorem}\label{thm:existenceOfEFhalf}
An EFM allocation always exists for any number of agents with additive valuations and can be found by Algorithm~\ref{alg:EFhalf} in polynomial time with $O(n^4)$ Robertson-Webb queries and $O(n^3)$ calls to the \perfect oracle.
\end{theorem}

To prove Theorem~\ref{thm:existenceOfEFhalf}, we first {show that the following invariants are} maintained by Algorithm~\ref{alg:EFhalf} during its run.

\vbox{
\hrule
\paragraph{Invariants}
\begin{enumerate}[label=\textbf{A\arabic*.},ref=A\arabic*]
\item In each round there is either an addable set for the cake-adding phase or an envy cycle for the envy-cycle-elimination phase. \label{item:addableSetOrEnvyCycle}
\item The partial allocation is always EFM. \label{item:partial-EFM}
\end{enumerate}
\hrule
}

\begin{lemma}\label{lem:envy-cycle-addable-set}
Invariant~\ref{item:addableSetOrEnvyCycle} holds during the algorithm's run.
\end{lemma}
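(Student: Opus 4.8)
The plan is to prove Invariant A1 by showing its contrapositive at the level of the envy graph: in any directed graph whose edges are of two types (envy and equality), if there is no addable set then there must be an envy cycle. First I would recall exactly what ``no addable set'' means. By Lemma~\ref{lem:max_addable}, the maximal addable set $S$ is obtained as $N \setminus \bigcup_j R_j$, where the union is taken over all heads $j$ of envy edges and $R_j$ is the set of vertices reachable from $j$ using edges of either type. So ``there is no addable set'' is equivalent to $S = \emptyset$, i.e.\ $\bigcup_j R_j = N$: every vertex is reachable (via envy/equality edges) from the head of some envy edge.

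My approach is then a reachability/pigeonhole argument. Suppose there is no addable set, so every vertex lies in some $R_j$. I want to produce a cycle that contains at least one envy edge. The natural idea is to follow edges backwards, or to exploit that the ``reach'' sets cover everything while the graph is finite. Concretely, I would argue that if the envy graph were acyclic on its envy edges (no envy cycle), one could identify a ``sink side'' of the envy structure that forms an addable set, contradicting $S = \emptyset$. One clean way to formalize this: consider the condensation or simply pick a vertex that is a sink among those reachable only through non-envy structure; more robustly, start from any vertex and repeatedly walk backwards along incoming envy-or-equality edges. Since the graph is finite, such a backward walk must eventually repeat a vertex, yielding a cycle. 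The delicate point is ensuring the cycle we find actually contains an envy edge rather than consisting entirely of equality edges, so I would handle the equality-only structure separately.

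The cleanest route is probably to contract each strongly connected component of the equality-edge subgraph to a single super-node, obtaining a graph where equality edges no longer form cycles. In this contracted graph I would look for a set of sink super-nodes with no incoming envy or equality edges from outside; pulling this back to $N$ would exhibit an addable set unless some envy edge creates a genuine cycle. I expect the main obstacle to be exactly this bookkeeping: cleanly distinguishing the two edge types so that a cycle detected by reachability is guaranteed to be an \emph{envy} cycle (contains at least one envy edge), and correctly translating the graph-theoretic ``no outgoing or incoming constraints'' of Definition~\ref{def:addableSet} into the reachability computation. A careful invariant to state and check is that if no envy cycle exists, the envy edges induce a partial order on the equality-components, whose maximal elements (those with no incoming envy edge and no incoming equality edge from outside) must form a nonempty addable set; establishing nonemptiness of that maximal set is the crux, and it follows from finiteness of the vertex set together with acyclicity of the envy relation on components.
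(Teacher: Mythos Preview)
Your plan is correct and takes a genuinely different route from the paper. The paper also proves the direction ``no envy cycle $\Rightarrow$ addable set exists,'' but instead of contracting equality SCCs it builds an auxiliary graph $G'$ whose \emph{vertices} are the envy edges of $G$, with an edge $v_{ij}\to v_{i'j'}$ whenever $j$ can reach $i'$ in $G$; a cycle in $G'$ unwinds to an envy cycle in $G$, so $G'$ is a DAG, it has a source $v_{ij}$, and then $i$ together with all agents that can reach $i$ via equality edges is shown to be addable. Your SCC-contraction argument is the more standard technique and makes the key observation transparent: once equality SCCs are collapsed, any remaining cycle must use an envy edge and lifts (via the internal equality paths) to an envy cycle in $G$, so the contracted graph is a DAG and any source component is addable---no incoming edges from outside by construction, and no internal envy edge because such an edge combined with the equality connectivity of the SCC would itself yield an envy cycle. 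Two small points to tighten when you write it up: the super-nodes you want are \emph{sources} (no incoming edges), not ``sinks''; and ``envy edges induce a partial order on components'' overstates what you need and what is immediately true---the statement you actually use is that the full contracted graph (envy edges together with inter-component equality edges) is acyclic, hence has a source.
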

\begin{proof}
This invariant is implied directly by Lemma~\ref{lem:addableSet-envyCycle}.
\end{proof}

\begin{lemma}\label{lem:alwaysEFhalf}
Invariant~\ref{item:partial-EFM} holds during the algorithm's run.
\end{lemma}
\begin{proof}
The partial allocation is clearly EFM after Step~\ref{EFhalfALG-EF1Allocation}.
Then the allocation is updated in three places in the algorithm: Steps~\ref{EFhalfALG-EFUpdate} and~\ref{EFhalfALG-PerfectUpdate} in the cake-adding phase and Step~\ref{EFhalfALG-cycleElimination} in the envy-cycle-elimination phase.
Given a partial allocation that is EFM, we will show that each of these updates maintains the EFM condition.

First, when we have $S = N$ in Step~\ref{EFhalfALG-EFbegin}, i.e., the addable set $S$ consists of all $n$ agents, the current envy graph does not contain any envy edge due to the definition of addable set (Definition~\ref{def:addableSet}).
This implies that current partial allocation actually is envy-free.
Because all valuation functions are additive, adding another envy-free allocation on top of it in Step~\ref{EFhalfALG-EFUpdate} results in an envy-free and, hence, EFM allocation.

We next consider Step~\ref{EFhalfALG-PerfectUpdate} in the cake-adding phase where a piece of cake is added to the addable set $S$.
In order to maintain an EFM partial allocation, we need to ensure that this process does not introduce any new envy towards agents in $S$.
Since we add a perfect allocation in Steps~\ref{EFhalfALG-addCake}-\ref{EFhalfALG-PerfectUpdate}, envy will not emerge among agents in $S$.
We also carefully choose the amount of cake to be allocated in Steps \ref{EFhalfALG-chooseCakeBegin}-\ref{EFhalfALG-chooseCakeEnd} such that each agent in $N \setminus S$ weakly prefers her bundle to any bundles that belong to agents in $S$.
To achieve this, we choose a piece of cake $C' \subseteq C$ to be \emph{perfectly} allocated to $S$ in
Steps~\ref{EFhalfALG-deltai}-\ref{EFhalfALG-chooseCakeEnd} so that no agent in $N$ will envy agents in $S$ after distributing $C'$ in
Steps~\ref{EFhalfALG-addCake}-\ref{EFhalfALG-PerfectUpdate}.
More specifically, for each agent $i \in N \setminus S$, we determine in Step~\ref{EFhalfALG-deltai} the largest value $\delta_i$ to be added to any agent in $S$ such that $i$ would still not envy any agent in $S$.
Then, the way we decide $x_{i^*}$ in Steps~\ref{EFhalfALG-istar}-\ref{EFhalfALG-cakeEnd} ensures that for all agents $i \in N \setminus S$, $v_i([a, x_{i^*}]) \leq |S|\cdot \delta_i$.
Next, in Step~\ref{EFhalfALG-addCake}, cake $C' = [a, x_{i^*}]$ is divided into $|S|$ pieces that are valued equally by all agents in $N$. This is to ensure that for any piece of cake $C_j$ allocated to $j \in S$, we have $u_i(C_j) \leq \delta_i$ for all $i \in N \setminus S$.
Thus, agents in $N \setminus S$ continues to not envy agents in $S$ in Step~\ref{EFhalfALG-PerfectUpdate}.

Finally, in the envy-cycle-elimination phase, Step~\ref{EFhalfALG-cycleElimination} eliminates envy edges by rearranging the partial allocation within the envy cycle $T$.
Since each agent in $T$ is weakly better off, the partial allocation remains EFM.
For agents in $N \setminus T$, rearranging the partial allocation that is EFM will not make EFM infeasible.
The conclusion follows.
\end{proof}

\paragraph{Correctness}
\begin{lemma}\label{lem:EFhalfAllocation}
Algorithm \ref{alg:EFhalf} always returns an EFM allocation upon termination.
\end{lemma}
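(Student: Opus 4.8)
The plan is to prove Lemma~\ref{lem:EFhalfAllocation} by combining the two invariants already established (Lemmas~\ref{lem:envy-cycle-addable-set} and~\ref{lem:alwaysEFhalf}) with a termination argument. Invariant A2 guarantees that the partial allocation is \EFhalf at every point during the run, and in particular whenever the algorithm halts. The \textbf{while}-loop condition is $C \neq \emptyset$, so upon termination the entire cake has been allocated and no indivisible goods remain unassigned (they were fully distributed in Step~\ref{EFhalfALG-EF1Allocation}). Hence the returned tuple $(A_1,\dots,A_n)$ is a complete allocation of all mixed goods, and by A2 it is \EFhalf. The only genuine content to establish, then, is that the algorithm actually \emph{terminates} — that the \textbf{while}-loop is exited after finitely many (indeed polynomially many) rounds, so that the moment of termination is well defined.

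The termination argument proceeds by bounding the number of iterations of each phase separately. First I would argue that the envy-cycle-elimination phase cannot run forever: this is the classical envy-cycle argument of \citet{LiptonMaMo04}. Eliminating an envy cycle by rotating bundles along it (Step~\ref{EFhalfALG-cycleElimination}) strictly decreases a suitable potential — for instance, the number of envy edges in $G$ strictly drops, or one may use a lexicographic potential on the multiset of agent utilities, since every agent on the cycle weakly improves and at least one (across an envy edge) strictly improves. As long as no cake is added, the set of distinct bundles is fixed, so there are only finitely many reachable allocations and the potential can decrease only finitely often before an addable set must appear. By Invariant A1, whenever there is no envy cycle there \emph{is} an addable set, so the algorithm never gets stuck between the two phases.

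Next I would bound the number of cake-adding rounds. The key observation is that each execution of the cake-adding phase removes a nontrivial chunk of cake in a way that advances a monotone quantity. When $S = N$ (Step~\ref{EFhalfALG-EFbegin}) or when the condition $u_i(C) < |S|\cdot\delta_i$ for all $i \in N\setminus S$ holds (Step~\ref{EFhalfALG-chooseCakeBegin}), the algorithm sets $C \leftarrow \emptyset$ and the loop terminates immediately. Otherwise, at the point $x_{i^*}$ chosen in Steps~\ref{EFhalfALG-xi}-\ref{EfhalfALG-istar}, agent $i^*$'s slack $\delta_{i^*}$ towards $S$ is exactly exhausted, so after distributing $C'$ perfectly, agent $i^*$ acquires a new equality edge into $S$ and thus $i^*$ can never again lie in $N\setminus S$ while $S$ plays the same role. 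I would make this precise by tracking the number of envy/equality edges or the membership structure of maximal addable sets: each ``cut'' cake-adding round either empties $C$ or strictly grows the reach of the equality/envy relation, converting at least one strict inequality into an equality. Since there are at most $O(n^2)$ ordered pairs of agents, and each such conversion is irreversible within a run of cake-adding, only polynomially many cake-adding rounds can occur before either $C$ empties or an envy cycle reappears.

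The step I expect to be the main obstacle is making the interleaving of the two phases yield a clean \emph{polynomial} bound, rather than merely finiteness. Individually each phase has an easy monotone potential, but after a cake-adding round new envy edges may be created, which could in principle feed fresh cycles back into the elimination phase and reset the cycle-potential. The crux is to find a single global measure — plausibly the pair $(\text{amount of unallocated cake brought to a ``checkpoint'' value},\ \text{number of strict envy relations})$ ordered lexicographically, or a count of the at most $O(n^2)$ agent pairs whose relation has been ``locked'' into equality by the $x_{i^*}$-cut — that decreases across \emph{both} phase types and is bounded by a polynomial in $n$. I would verify that each cake-adding round which does not terminate the loop permanently reduces the number of candidate pairs that can ever again trigger a cut, so that the total work, including the cycle-elimination rounds sandwiched between cake additions, remains polynomial; pinning down that this locking is genuinely irreversible is where the careful bookkeeping lies.
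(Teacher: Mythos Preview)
You have misread the scope of the lemma. The statement says ``upon termination,'' so termination is a \emph{hypothesis} here, not something to be established. The paper's proof of this lemma is three lines: by Invariant~A2 the partial allocation is always \EFhalf, and the while-loop (Steps~\ref{EFhalfALG-remainingCake}--\ref{EFhalfALG-remainingEnd}) exits only when $C=\emptyset$, so when the algorithm returns, all goods have been allocated and the allocation is \EFhalf. Termination is proved separately in Lemma~\ref{lem:EFhalfTermination}, using Lemmas~\ref{lem:equalEdgeIncrease} and~\ref{lem:envyEdgeDecrease}. Your proposal folds all of that into the present lemma, which is not the paper's decomposition.

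Beyond the scoping issue, your termination sketch contains a substantive inaccuracy. You write that ``after a cake-adding round new envy edges may be created, which could in principle feed fresh cycles back into the elimination phase.'' In fact the paper proves (Lemma~\ref{lem:equalEdgeIncrease}) that the number of envy edges \emph{never increases} after a cake-adding phase: the perfect allocation among $S$ creates no envy within $S$, agents in $S$ only gain so no new envy goes from $S$ to $N\setminus S$, nothing changes within $N\setminus S$, and the cut point $x_{i^*}$ is chosen precisely so that no agent in $N\setminus S$ starts to envy anyone in $S$. This is exactly what makes the global potential clean: the number of envy edges is monotone nonincreasing across \emph{both} phases and strictly decreases in every envy-cycle-elimination round (Lemma~\ref{lem:envyEdgeDecrease}), so there are at most $O(n^2)$ such rounds in total. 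Between consecutive decreases of the envy-edge count, the paper shows that each non-terminal cake-adding round strictly shrinks the maximal addable set (not your ``locking equality edges'' count), giving at most $O(n)$ cake-adding rounds per segment and $O(n^3)$ overall. Your worry about interleaving resetting the potential does not arise.
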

\begin{proof}
By Invariant~\ref{item:partial-EFM}, it suffices to prove that all goods are allocated when Algorithm \ref{alg:EFhalf} terminates.
All indivisible goods are allocated in Step \ref{EFhalfALG-EF1Allocation}.
Then the \texttt{while} loop (Steps \ref{EFhalfALG-remainingCake}-\ref{EFhalfALG-remainingEnd}) terminates only when the cake is also fully allocated, as desired.
\end{proof}

\paragraph{Termination and Time Complexity}

We use the number of envy edges in the envy graph and the size of {the} maximal addable set as a potential function to bound the running time of this algorithm.

\begin{lemma}\label{lem:equalEdgeIncrease}
After the algorithm completes a cake-adding phase, the number of envy edges never increases.
In addition, if the piece of cake to be allocated is not the whole remaining cake, either (a) the number of envy edges strictly decreases, or (b) the size of the maximal addable set strictly decreases {or an addable set no longer exists}.
\end{lemma}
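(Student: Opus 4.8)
The plan is to track, for each ordered pair of agents $(p,q)$, the \emph{slack} $s(p,q) = u_p(A_p) - u_p(A_q)$ and to see exactly how the cake-adding phase changes it. Since the perfect allocation $\perfect(C',|S|)$ hands every agent in the maximal addable set $S$ a piece worth precisely $u_p(C')/|S|$ to any agent $p$, the slack $s(p,q)$ changes by $([p\in S]-[q\in S])\cdot u_p(C')/|S|$. Hence slacks are unchanged whenever $p,q$ lie on the same side of the partition $(S,\,N\setminus S)$; they weakly increase when $p\in S,\,q\notin S$; and they weakly decrease when $p\notin S,\,q\in S$. The only regime in which an envy edge could be \emph{created} is the last one, and for such pairs the addable-set property (Definition~\ref{def:addableSet}) gives $s(p,q)>0$ before the update, while Steps~\ref{EFhalfALG-deltai}--\ref{EFhalfALG-chooseCakeEnd} select $C'$ so that $u_p(C')/|S|\le \delta_p \le s(p,q)$ for every $q\in S$. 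Thus the new slack stays nonnegative and no envy edge appears. This establishes that the number of envy edges never increases in any cake-adding phase (the cases $S=N$ and $C'=C$ are covered by the same bookkeeping).

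For the second part, I restrict to $C'=[a,x_{i^*}]\neq C$, i.e.\ the else branch with $x_{i^*}<b$. By the defining equation $u_{i^*}([a,x_{i^*}])=|S|\cdot\delta_{i^*}$ from Step~\ref{EFhalfALG-xi} together with the choice of $\delta_{i^*}$ in Step~\ref{EFhalfALG-deltai}, agent $i^*$'s slack toward the minimizing $j'\in S$ drops to exactly $0$, so a \emph{new equality edge} $i^* \eqArrow j'$ is created, with $i^*\in N\setminus S$ and $j'\in S$. I then split into the two claimed cases. If some envy edge disappears, the envy count strictly decreases and we are in case~(a). Otherwise the envy count is unchanged; since it can only decrease, the \emph{set} of envy edges is literally unchanged, and the only remaining modifications are additions of equality edges from $N\setminus S$ into $S$ together with possible deletions of equality edges pointing from $S$ outward (the slack-increasing direction).

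In that second case I will argue the maximal addable set strictly shrinks, using the characterization from the proof of Lemma~\ref{lem:max_addable}: an agent lies outside the maximal addable set iff it is reachable, via envy/equality edges, from the head of some envy edge. The key structural fact is that in the old graph an addable set has \emph{no} incoming envy or equality edges, so any path from an envy-edge head $j$ (which itself lies in $N\setminus S$) to a vertex of $N\setminus S$ must stay entirely inside $N\setminus S$; all such edges have frozen slack and so persist into the new graph, and the preserved envy edge $i \envyArrow j$ still witnesses them. Hence every vertex of $N\setminus S$ remains outside the new maximal addable set $S'$, giving $S'\subseteq S$. On the other hand, the new equality edge $i^* \eqArrow j'$ makes $j'\in S$ reachable from the envy-edge head that already reached $i^*\in N\setminus S$, so $j'\notin S'$. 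Therefore $S'\subsetneq S$ and the size strictly decreases, which is case~(b).

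The step I expect to be the main obstacle is precisely this reachability bookkeeping: although the perfect allocation can \emph{delete} equality edges leaving $S$, I must ensure none of the paths certifying membership in $N\setminus S$ is destroyed. This is exactly where the addable-set property — no edges entering $S$ in the old graph — does the work, since it confines the relevant paths to the region $N\setminus S$ on which all slacks, and hence all edges, are frozen.
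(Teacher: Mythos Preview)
Your argument is correct, and for the first part (no envy edge is created) it matches the paper's case analysis, just expressed more compactly via the slack $s(p,q)$. One small inaccuracy: when $S=N$ the algorithm calls an \emph{envy-free} allocation (Step~\ref{EFhalfALG-EF}), not $\perfect$, so your formula $([p\in S]-[q\in S])\cdot u_p(C')/|S|$ does not literally apply there; the right observation is that $s(p,q)$ changes by $u_p(C_p)-u_p(C_q)\ge 0$, and since the addable set $S=N$ already forces all slacks to be nonnegative, none can become negative. This is harmless for the conclusion but is not ``the same bookkeeping.''

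For the second part you take a genuinely different route from the paper. The paper proves $S'\subseteq S$ by contradiction: assuming $S'\not\subseteq S$, it shows $S\cup S'$ would be an addable set in the old graph $G$ (using that $G$ and $G'$ share the same envy-edge set and that any equality edge from outside $S\cup S'$ into $S'\setminus S$ lies entirely in $N\setminus S$ and hence persists into $G'$), contradicting the maximality of $S$. You instead invoke the explicit reachability description of the maximal addable set from the proof of Lemma~\ref{lem:max_addable} and track which witnessing paths survive the update. Both arguments are valid; the paper's union-of-addable-sets trick is shorter and avoids chasing paths, while your approach makes transparent \emph{why} the new equality edge $i^*\eqArrow j'$ ejects $j'$ from the addable set --- it extends a surviving reachability path from an envy-edge head through $i^*$. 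Your observation that all such paths are confined to $N\setminus S$ (because the old $S$ has no incoming edges) and are therefore frozen is exactly what makes the reachability approach work.
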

\begin{proof}
By Lemma~\ref{lem:alwaysEFhalf}, the partial allocation is always EFM after a cake-adding phase.
In a cake-adding phase, some positive amount of cake is added to every agent in $S$. This means after this phase, there would never be any envy edge between agents in $S$ or from $N \setminus S$ to $S$.
The bundles of agents in $N \setminus S$ remains the same, hence the set of edges among agents in $N \setminus S$ remains unchanged.
Lastly, since only agents in $S$ are allocated new resources in the cake-adding phase, no new envy edge will be introduced from $S$ to $N \setminus S$.
This proves the first part of Lemma~\ref{lem:equalEdgeIncrease}.

For the second part, we only study the situation when the piece of cake to be allocated to agents in $S$ is not the whole remaining cake (Steps~\ref{EFhalfALG-cakeBegin}-\ref{EFhalfALG-cakeEnd}).
Note that the number of envy edges will never increase after a cake-adding phase as proved above.
It suffices to show that if the number of envy edges remains unchanged and an addable set still exists, then the size of the maximal addable set must strictly decrease.

Note that based on how we choose $i^*$ in Step~\ref{EFhalfALG-istar}, after the cake-adding phase, at least one equality edge will be generated in the envy graph from agent $i^*$ to some agent $j \in S$.
Let $G$ and $G'$ be the envy graphs before and after the cake-adding phase, and let $S$ and $S'$ be the maximal addable set of $G$ and $G'$, respectively.
In the following we will show that $S' \subset S$.


We first show $S' \subseteq S$.
Suppose otherwise, we will show that $S \cup S'$ is also an addable set in $G$, which contradicts to the maximality of $S$.
The reasons that $S \cup S'$ is an addable set in $G$ are as follows.
\begin{enumerate}[label=(\roman*)]
\item 
We have already proved that {compared} to $G$, there is no new envy edge in $G'$. If $G$ and $G'$ has the same number of envy edges, they must share exactly the same set of envy edges.
Hence, there will be no envy edge pointing to either $S$ or $S'$ in $G$.

\item 
If there is an equality edge from $N \setminus (S \cup S')$ to $S \cup S'$ in $G$, this equality edge cannot be from $N \setminus (S \cup S')$ to $S$ because $S$ is an addable set in $G$.
Hence, it must be from $N \setminus (S \cup S')$ to $S' \setminus S$.
This equality edge remains in $G'$ because neither the agents in $N \setminus (S \cup S')$ nor the agents in $S' \setminus S$ receive any good.
However, this is impossible because $S'$ is an addable set in $G'$.
In summary, there cannot be any equality edges from $N \setminus (S \cup S')$ to $S \cup S'$ in $G$.
\end{enumerate}

To further prove $S' \subset S$, we recall that according to our algorithm, at least one equality edge, from agent $i^*$ in $N \setminus S$ to some agent $j \in S$, will be included in $G'$.
It is then clear that $j$ cannot be in $S'$.
This concludes the proof.
\end{proof}

\begin{lemma}\label{lem:envyEdgeDecrease}
After the algorithm completes an envy-cycle-elimination phase, the number of envy edges strictly decreases.
\end{lemma}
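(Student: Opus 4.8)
The plan is to analyze the envy-cycle-elimination step (Step~\ref{EFhalfALG-cycleElimination}) and track what happens to the set of envy edges after the bundles are rotated along the envy cycle $T$. Recall that when we rotate bundles along a cycle $i_1 \envyArrow i_2 \envyArrow \cdots \envyArrow i_r \envyArrow i_1$ (containing at least one genuine envy edge), each agent $i_k$ in $T$ receives the bundle previously held by the agent she points to, i.e., $i_{k+1}$. The key observation is that since $i_k \envyArrow i_{k+1}$ or $i_k \eqArrow i_{k+1}$ held along the cycle, each agent $i_k$ is now weakly better off: her new utility is $u_{i_k}(A_{i_{k+1}}) \ge u_{i_k}(A_{i_k})$, and for at least one agent the inequality is strict because the cycle contains an envy edge.

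First I would make precise how the set of envy edges changes. I will partition the potential envy edges of $G'$ (the graph after rotation) into edges incident to $T$ and edges disjoint from $T$, and bound each. For edges disjoint from $T$ (both endpoints in $N \setminus T$), the bundles are unchanged, so these edges are exactly preserved --- no new envy is created and none is destroyed here. For the edges internal to or incident to $T$, I would argue that the total count strictly drops. The cleanest way is to exhibit an injection from the envy edges of $G'$ into the envy edges of $G$ that is \emph{not} surjective. Concretely, I would use the standard accounting: every agent in $T$ weakly increases her utility, so the number of agents who envy a \emph{fixed} bundle (viewed as a physical allocation of goods) can only stay the same or decrease; meanwhile the agent on the strict envy edge of $T$ strictly improves, eliminating at least one envy relation without replacement.

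The main obstacle I anticipate is ruling out the creation of \emph{new} envy edges after the rotation, both those pointing from $N \setminus T$ into the relocated bundles and those among agents of $T$ themselves. For the former, the point is that the physical bundles $A_{i_1},\dots,A_{i_r}$ are merely permuted among the agents of $T$, so the multiset of bundles held by $T$ is unchanged; hence any agent $\ell \in N \setminus T$ who did not envy any bundle before still does not envy any bundle now, since the same collection of bundles is present. For envy among $T$-agents, I would compare agent $i_k$'s new utility $u_{i_k}(A_{i_{k+1}})$ against $u_{i_k}$ of the other (permuted) bundles; because $i_k$ received precisely the bundle she strictly or weakly preferred along the cycle, any new envy she might feel must have corresponded to an envy edge already present in $G$ from $i_k$ to some other agent. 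I would formalize this by charging each envy edge of $G'$ to a distinct envy edge of $G$, so the count does not increase, and then invoke the strictly-improving agent on the envy edge of $T$ to conclude a strict decrease.

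Thus the proof structure is: (1) edges within $N \setminus T$ are untouched; (2) the permuted bundles guarantee no agent in $N \setminus T$ gains new envy; (3) each agent in $T$ is weakly better off, so her outgoing envy edges in $G'$ inject into those in $G$; and (4) the at-least-one strict improvement from the envy edge in $T$ removes an envy edge that is not replaced, yielding a strict net decrease in the number of envy edges. The combination of these steps gives the claimed monotone strict decrease and hence termination of the envy-cycle-elimination phase.
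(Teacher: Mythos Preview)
Your proposal is correct and follows essentially the same approach as the paper: partition the possible envy edges according to whether their endpoints lie in $T$ or $N\setminus T$, use the fact that bundles are merely permuted to handle edges incident to $N\setminus T$, and use that every agent in $T$ is weakly better off (with at least one strictly so) to conclude a strict decrease. Your explicit injection $(i,j)\mapsto(i,\sigma(j))$ is a slightly more careful formalization of the within-$T$ case than the paper's brief treatment, but the underlying idea is the same standard envy-cycle-elimination bookkeeping.
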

\begin{proof}
The basic idea of this proof follows from~\citet{LiptonMaMo04}, albeit only strict envy edges were considered in their context.
In the envy-cycle-elimination phase, an envy cycle $T$ is eliminated by giving agent $j$'s bundle to agent $i$ for each edge $i \envyArrow j$ or $i \eqArrow j$ in the cycle.
First, this process does not affect the bundles of agents in $N \setminus T$, hence the set of envy edges among them remains the same.
Next, since we only swap bundles in this phase, the number of envy edges from $N \setminus T$ to $T$ remains the same.
In addition, every agent $i \in T$ receives a weakly better bundle, meaning that the number of envy edges from $T$ to $N \setminus T$ does not increase.
{Finally, because $T$ contains at least one envy edge, some agent in $T$ will receive a strictly better bundle.
As a result, although some envy edges between agents in $T$ may still exist, the total number of envy edges will decrease by at least one.}
\end{proof}

\begin{lemma}\label{lem:EFhalfTermination}
Algorithm~\ref{alg:EFhalf} terminates in polynomial time with $O(n^3)$ calls to the \perfect oracle and $O(n^4)$ Robertson-Webb queries.
\end{lemma}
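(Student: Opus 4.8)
The plan is to run a potential argument on the two quantities that the preceding lemmas already control: the number of envy edges in the current envy graph, which I will call $e$, and the size of the maximal addable set. First I would record the two global facts from Lemmas~\ref{lem:equalEdgeIncrease} and~\ref{lem:envyEdgeDecrease}: a cake-adding phase never increases $e$, while an envy-cycle-elimination phase strictly decreases $e$. Consequently $e$ is non-increasing over the entire execution. Since the algorithm starts from an EF1 allocation of the indivisible goods and $e$ counts ordered pairs of agents, we have $e \le n(n-1) = O(n^2)$ at the start and hence throughout.

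Next I would bound the number of phases by grouping them according to the value of $e$. Because $e$ is non-increasing and integer-valued in $[0, n(n-1)]$, the execution splits into at most $O(n^2)$ maximal \emph{blocks} of consecutive phases during which $e$ stays constant. Within a single block, an envy-cycle-elimination phase can only appear as the last phase of the block, since it strictly drops $e$; likewise any cake-adding phase that strictly decreases $e$ can only be the final phase of its block. Therefore every non-terminal phase inside a block is a cake-adding phase that leaves $e$ unchanged and (not being terminal) allocates only a proper piece of the remaining cake. By the second part of Lemma~\ref{lem:equalEdgeIncrease}, each such phase strictly shrinks the maximal addable set, and indeed the freshly recomputed maximal addable set is a strict subset of the previous one. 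As the addable-set size lies in $\{1,\dots,n\}$, a block can contain at most $n$ such cake-adding phases.

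Combining the two counts, there are at most $O(n^2)$ blocks, each containing $O(n)$ cake-adding phases and at most one envy-cycle-elimination phase, so the algorithm performs at most $O(n^3)$ cake-adding phases and $O(n^2)$ envy-cycle-elimination phases. Each cake-adding phase invokes the perfect allocation oracle at most once (Step~\ref{EFhalfALG-addCake}), which yields the claimed $O(n^3)$ oracle calls. For the running time I would note that every phase does only polynomial work besides the oracle: computing the maximal addable set costs $O(n^3)$ by Lemma~\ref{lem:max_addable}, the thresholds $\delta_i$ and cut points $x_i$ require $O(n)$ Robertson-Webb queries, locating an envy cycle and updating the envy graph are polynomial, and the initial EF1 allocation is polynomial. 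Multiplying $O(n^3)$ phases by the per-phase polynomial cost gives an overall polynomial running time.

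The main obstacle I anticipate is the interaction between the two potential quantities: the maximal addable set size is \emph{not} monotone across the whole run, since it can rebound after $e$ drops (for instance, once an envy-cycle-elimination phase reshuffles bundles). The block decomposition by the value of $e$ is exactly what tames this, because inside a block $e$ is frozen and the addable-set potential is forced to decrease monotonically. The delicate point to argue carefully is that no $e$-decreasing phase can sit strictly inside a block, so that the per-block bound of $n$ cake-adding phases is legitimate and the product $O(n^2)\cdot O(n)$ genuinely upper bounds the total.
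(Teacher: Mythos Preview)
Your proposal is correct and follows essentially the same approach as the paper: bound the number of envy-edge drops by $O(n^2)$, then between consecutive drops use Lemma~\ref{lem:equalEdgeIncrease} to show the maximal addable set strictly shrinks, giving at most $O(n)$ cake-adding phases per segment and hence $O(n^3)$ oracle calls overall. Your block decomposition is simply a more explicit packaging of the paper's ``between two consecutive rounds that decrease the number of envy edges'' argument, and your remark that the addable-set size can rebound after an envy-edge drop (so the secondary potential is only locally monotone) is exactly the subtlety the paper handles in the same way.
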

\begin{proof}
\textbf{Calls to the} \perfect \textbf{oracle.}
By Invariant~\ref{item:addableSetOrEnvyCycle}, each round in Algorithm~\ref{alg:EFhalf} executes either a cake-adding phase or an envy-cycle-elimination phase.
According to Lemmas~\ref{lem:equalEdgeIncrease} and~\ref{lem:envyEdgeDecrease}, the number of envy edges never increases.
Thus the number of rounds in which the number of envy edges strictly decreases is bounded by $O(n^2)$.

We now upper bound the number of cake-adding phase rounds between any two consecutive rounds that decrease the number of envy edges.
If the whole remaining cake is allocated (Step~\ref{EFhalfALG-EF}), $\perfect(C, n, N)$ is called once and then Algorithm~\ref{alg:EFhalf} terminates.
In the case that a piece of remaining cake is allocated, by Lemma~\ref{lem:equalEdgeIncrease}, the size of the maximal addable set strictly decreases
or an addable set no longer exists; in the latter case, the algorithm proceeds to an envy-cycle-elimination phase.
Because the size of any addable set is $O(n)$, it means that the number of cake-adding phase rounds between any two consecutive rounds that decrease the number of envy edges is $O(n)$.

Finally, it follows that Algorithm~\ref{alg:EFhalf} executes at most $O(n^2) \cdot O(n) = O(n^3)$ cake-adding phase rounds.
Every such round calls the \perfect oracle once.
Algorithm~\ref{alg:EFhalf} makes $O(n^3)$ calls to the \perfect oracle.

\medskip

\noindent\textbf{Polynomial running time and RW queries.}
Note that during the algorithm's run, we add resources to a bundle and rotate bundles among agents, but never split a bundle.
For example, the partition of indivisible goods is computed in Step~\ref{EFhalfALG-EF1Allocation} and remains the same since then.
To avoid redundant computations, we maintain an $n$ by $n$ array to keep track of $u_i(A_j)$ for all $i, j \in N$ and update them as necessary.

In Step~\ref{EFhalfALG-EF1Allocation}, finding an EF1 allocation of indivisible goods can be done in $O(m n \log m)$ via the round-robin algorithm~\citep{CaragiannisKuMo19}.
The implementation details are as follows.
We first compute the sorted order of goods according to each agent's valuation, which takes $O(n m\log m)$ time overall.
Next, in each agent's turn, we keep looking for the next unallocated good in that agent's sorted list.
This step takes $O(mn)$ time in total.
Therefore, the overall running time of the round-robin algorithm is dominated by $O(m n \log m)$.

Next, in Step~\ref{EFhalfALG-initialEnvyGraph}, the overall time to construct the corresponding envy graph is $O(n^2)$.

We now consider the \texttt{while} loop.
According to Lemma~\ref{lem:max_addable}, we can find the maximal addable set or decide its non-existence in time $O(n^3)$.
In the case that we need to perform an envy-cycle-elimination, an envy cycle $T$ can be found in the following way.
Fix an agent $i$, we can first spend $O(n)$ time scanning all outgoing edges and ignore those equality edges.
Then, we apply depth-first search (DFS) starting from vertex $i$.
If there is a back edge pointing to vertex $i$, then there must be an envy cycle with at least one envy edge, say, e.g., $i \envyArrow j$, for some $j \in N$.
This takes $O(n^2)$ time since DFS dominates the time complexity.
Since there are $O(n)$ agents, overall, this step can be implemented in $O(n^3)$ time.

In the following, we discuss the steps in each phase at length.
\begin{description}
\item[Cake-adding phase] When we have $S = N$ satisfied in Step~\ref{EFhalfALG-EFbegin}, we implement an EF allocation by calling $\perfect(C, n, N)$.
It takes $O(n)$ time to update the allocation.
Algorithm~\ref{alg:EFhalf} then terminates.

It takes $O(n^2)$ time in Step~\ref{EFhalfALG-deltai} to compute $\delta_i$ for all $i \in N \setminus S$.
Steps~\ref{EFhalfALG-chooseCakeBegin} and~\ref{EFhalfALG-xi} need $O(n)$ evaluation and cut queries respectively.
Once $C'$ is determined in Step~\ref{EFhalfALG-cakeEnd}, we can make $O(n)$ evaluation queries from all $n$ agents over $C'$.
Because we use a perfect allocation of $C'$, we can directly compute $u_i(C') / |S|$ for all $i \in N$ to obtain the value increment of each agent in the addable set.
It then takes $O(n^2)$ time to update all agents' valuations of all bundles after Step~\ref{EFhalfALG-PerfectUpdate}.
After this, updating an envy graph also takes $O(n^2)$ time.

Since we only make RW queries in this phase, we summarize here that Algorithm~\ref{alg:EFhalf} makes $O(n^4)$ Robertson-Webb queries, because there are $O(n^3)$ cake-adding phases (stated earlier in this proof) and each such phase needs at most $O(n)$ RW queries.

\item[Envy-cycle-elimination phase] Since we maintain an array as the reference for agents' valuations over the current bundles, we can rotate the bundles as well as update the array and the envy graph in time $O(n^2)$.
\end{description}
The {remaining} steps can be implemented in time $O(n)$.
Overall, Algorithm~\ref{alg:EFhalf} runs in time $O(m n \log m + n^6)$, where the $n^6$ term comes from the $O(n^3)$ total number of  \texttt{while} loops and $O(n^3)$ time to run each loop.
\end{proof}

Finally the correctness of Theorem~\ref{thm:existenceOfEFhalf} is directly implied by Lemma~\ref{lem:EFhalfAllocation} and Lemma~\ref{lem:EFhalfTermination}.

\paragraph{Bounded Protocol in the RW Model}
Even though we showed that Algorithm~\ref{alg:EFhalf} can produce an EFM allocation, it is not a bounded protocol in the RW model.
This is because our algorithm utilizes an oracle that can compute a perfect allocation of any piece of cake.
However, while {a perfect allocation} always exists, it is known that {such an} allocation cannot be implemented with a finite number of queries in the RW model, even if there are only two agents~\citep{RobertsonWe98}.
Whether there exists a \emph{bounded} protocol in the RW model to compute an EFM allocation remains a very interesting open question.
Note that the perfect allocation oracle cannot be implemented even with a finite number of queries, therefore it is even an open question to find a \emph{finite} EFM protocol.

A natural and tempting approach to get a bounded EFM protocol would be to use an envy-free allocation, for which {a bounded protocol in the RW model} is known~\citep{AzizMa16}, to replace the perfect allocation in Step~\ref{EFhalfALG-addCake}.
Note that doing so would not create any new envy edge within set $S$.
Then, in order to not create any envy edge from $N \setminus S$ to $S$, we need to restrict the total value of the piece of cake allocated to $S$ to not exceed $\delta_i$ (Step~\ref{EFhalfALG-deltai}) for every agent $i \in N \setminus S$.
However, when doing so, we will not be able to quantify the progress of the algorithm like in Lemma~\ref{lem:equalEdgeIncrease}.
Specifically, we can no longer guarantee that either the number of envy edges strictly decreases or the size of the maximal addable set strictly decreases.
This is because we are not guaranteed the equality edge from agent $i^*$ to some agent $j \in S$ as we rely on in the proof of Lemma~\ref{lem:equalEdgeIncrease}.
In other words, we cannot show the algorithm will always terminate in bounded steps.
Interestingly, in sharp contrast, in Section~\ref{sec:approxEFM} we will show that an approximate envy-free protocol, instead of an $\epsilon$-perfect protocol, is enough to give an efficient $\epsilon$-EFM algorithm.
We will discuss this phenomenon in further detail in Section~\ref{sec:approxEFM}.

In the next two sections, we present two bounded protocols to compute an EFM allocation for two special cases, and another bounded (polynomial time) protocol to compute an $\epsilon$-EFM allocation in the general case.

\section{EFM Allocation in Special Case}\label{sec:EFMSpecial}
In this section, we show two special cases where an EFM allocation can be computed in polynomial time without using the perfect allocation oracle.
One is the 2-agent case with general valuations while the other deals with the $n$-agent case but each agent has a structured density function for the cake.

\subsection{Two Agents}
We first show that with only two agents, an EFM allocation can be found using a simple cut-and-choose type of algorithm.
We start with a partition $(M_1, M_2)$ of all indivisible goods such that agent 1 is EF1 with respect to either bundle.
{Without loss of generality,} we assume that $u_1(M_1) \geq u_1(M_2)$.
Next agent 1 adds the cake into $M_1$ and $M_2$ so that the two bundles are as close to each other as possible.
Note that if $u_1(M_1) > u_1(M_2 \cup C)$, agent 1 would add all cake to $M_2$.
If $u_1(M_1) \leq u_1(M_2 \cup C)$, agent 1 has a way to make the two bundles equal.
We then give agent 2 her preferred bundle and leave to agent 1 the remaining bundle.

\begin{algorithm}[t]
\caption{EFM Allocation for Two Agents}
\label{alg:2Agents}
\begin{algorithmic}[1]
\REQUIRE Agents 1 and 2, indivisible goods $M$ and cake $C$.

\STATE Divide $M$ into two parts, $M_1, M_2$, such that agent 1 is EF1 with respect to either bundle. Assume w.l.o.g.~that $u_1(M_1) \geq u_1(M_2)$ (otherwise we can swap $M_1$ and $M_2$). \label{2AgentALG-EF1}
\IF {$u_1(M_1) \leq u_1(M_2 \cup C)$} \label{2AgentALG-cakeNotEnoughBegin}
	\STATE Let agent 1 partition the cake into two pieces $(C_1, C_2)$, such that $u_1(M_1 \cup C_1) = u_1(M_2 \cup C_2)$. \label{2AgentALG-divideCake}
	\STATE Let $(A_1, A_2) = (M_1 \cup C_1, M_2 \cup C_2)$.
\ELSE \label{2AgentALG-cakeEnough}
	\STATE Let $(A_1, A_2) = (M_1, M_2 \cup C)$.
\ENDIF

\STATE Give agent 2 her preferred bundle among $A_1, A_2$. Give agent 1 the remaining bundle. \label{2AgentALG-allocation}
\end{algorithmic}
\end{algorithm}

\begin{theorem}
Algorithm \ref{alg:2Agents} returns an EFM allocation in the case of two agents in polynomial time.
\end{theorem}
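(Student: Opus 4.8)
The plan is to split the verification into the two envy directions and, within agent~$1$'s direction, into the two branches of the algorithm. The easy observation to record first is that in the final step agent~$2$ receives her weakly preferred of the two bundles, so $u_2(A_2)\ge u_2(A_1)$ and agent~$2$ never envies agent~$1$; since ``no envy'' implies the \EFhalf condition regardless of whether the other bundle contains cake (if it does we need EF, which holds; if it does not we need EF1, which is implied by EF), the direction from agent~$2$ to agent~$1$ is automatic. Hence everything reduces to agent~$1$'s \EFhalf condition toward agent~$2$. I would also dispose of well-definedness here: in the branch $u_1(M_1)\le u_1(M_2\cup C)$ the balancing cut exists by the intermediate value theorem, since sweeping the cut point moves the first bundle's value continuously from $u_1(M_1)$ up to $u_1(M_1)+u_1(C)$ while the second moves from $u_1(M_2\cup C)$ down to $u_1(M_2)$, and these cross because $u_1(M_1)\le u_1(M_2\cup C)$ while $u_1(M_1)+u_1(C)\ge u_1(M_2)$; nonatomicity lets a single cut query realize it.

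Next I would handle the branch $u_1(M_1)\le u_1(M_2\cup C)$. Here agent~$1$ has equalized the bundles, so $u_1(A_1)=u_1(A_2)$; whichever bundle she ends up with she is exactly indifferent and envies no one. Together with agent~$2$'s free choice the whole allocation is envy-free, hence \EFhalf. This branch needs nothing beyond continuity.

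The crux is the branch $u_1(M_1)>u_1(M_2\cup C)$, where $(A_1,A_2)=(M_1,M_2\cup C)$. If agent~$2$ keeps $A_2=M_2\cup C$, then agent~$1$ retains $M_1$ and $u_1(M_1)>u_1(M_2\cup C)$ gives no envy, so we are done. The hard sub-case is when agent~$2$ takes the all-indivisible bundle $A_1=M_1$, leaving agent~$1$ with $M_2\cup C$ and a genuine strict envy toward $M_1$. Because $M_1$ contains only indivisible goods, \EFhalf now demands the EF1 guarantee: some $g\in M_1$ with $u_1(M_2\cup C)\ge u_1(M_1\setminus\{g\})$. This is exactly where an arbitrary EF1 starting allocation is insufficient — one can build an allocation that is EF1 for \emph{both} agents yet in which, after removing agent~$1$'s most valued item, $M_1$ is still worth strictly more than $M_2\cup C$ to agent~$1$. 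I would therefore strengthen Step~\ref{2AgentALG-EF1} and require the initial partition to be EF1 \emph{with respect to agent~$1$'s valuation in both directions}, i.e.\ $u_1(M_1)-\max_{g\in M_1}u_1(g)\le u_1(M_2)$ in addition to $u_1(M_1)\ge u_1(M_2)$. Taking $g=\arg\max_{g\in M_1}u_1(g)$ then yields $u_1(M_1\setminus\{g\})\le u_1(M_2)\le u_1(M_2\cup C)$, which closes the sub-case. Note that agent~$2$'s EF1 is never invoked, since her free choice already precludes envy on her side.

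Finally I would show such a balanced partition exists cheaply and bound the running time. Processing the goods one at a time and always adding the next good to the bundle currently smaller under $u_1$ yields a partition in which the last good $g$ placed into the heavier bundle $M_1$ satisfies $u_1(M_1)-u_1(g)\le u_1(M_2)$; since $u_1(g)\le\max_{g'\in M_1}u_1(g')$, the required balance inequality follows. This uses only agent~$1$'s valuation, costs $O(m)$ evaluations, and is the one genuine deviation from ``any EF1 allocation.'' The cake split is a single cut query, agent~$2$'s comparison is a constant number of evaluation queries, and the bookkeeping is $O(m)$, so the procedure runs in polynomial time (indeed with a constant number of Robertson--Webb queries on the divisible part). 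The main obstacle, and the only place requiring real care, is the all-indivisible sub-case of the second branch, which forces the balanced choice of the initial allocation.
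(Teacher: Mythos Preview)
Your argument tracks the paper's case analysis almost exactly, but you have spotted and repaired a genuine gap that the paper's own proof glosses over. In the hard sub-case (agent~$2$ takes the all-indivisible bundle $M_1$, leaving agent~$1$ with $M_2\cup C$), the paper simply asserts that ``there exists some good $g$ in $A_1$ such that $u_1(A_2)\ge u_1(M_2)\ge u_1(A_1\setminus\{g\})$,'' i.e.\ that $u_1(M_2)\ge u_1(M_1\setminus\{g\})$ for some $g\in M_1$. But EF1 of the starting allocation does \emph{not} deliver this: it only gives $u_1(M_1)\ge u_1(M_2\setminus\{g'\})$ (trivial here, since $u_1(M_1)\ge u_1(M_2)$) and $u_2(M_2)\ge u_2(M_1\setminus\{g''\})$, neither of which bounds agent~$1$'s value of $M_1\setminus\{g\}$ against $M_2$. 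Your suspicion is well founded; a concrete witness is $M_1=\{a,b\}$, $M_2=\{c,d\}$ with $u_1(a)=u_1(b)=10$, $u_1(c)=u_1(d)=0$, $u_2(a)=u_2(b)=5$, $u_2(c)=u_2(d)=3$, and a cake of negligible value to both agents. This partition is EF1 and satisfies $u_1(M_1)\ge u_1(M_2)$, yet agent~$2$ prefers $M_1$ and the resulting allocation violates \EFhalf for agent~$1$.

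Your remedy---replacing ``an EF1 allocation'' by a partition balanced for agent~$1$, namely one with $u_1(M_1)-\max_{g\in M_1}u_1(g)\le u_1(M_2)$, produced by greedily placing each good into the bundle currently lighter under $u_1$---is exactly what the argument needs, and your justification via the last good placed in $M_1$ is correct. You are also right that EF1 for agent~$2$ is never used, since her free choice already secures envy-freeness on her side. So your proof is complete and in fact more rigorous than the paper's; the only substantive deviation is substituting this specific balanced split for the paper's unspecified EF1 allocation, which the paper's own final inequality tacitly requires but does not guarantee.
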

\begin{proof}
\noindent\textbf{Correctness.}
It is obvious that all goods are allocated.
We next show that the allocation returned is EFM.
Agent 2 is guaranteed EF (thus EFM) since she gets her preferred bundle between $A_1$ and $A_2$.
In the following, we focus on agent 1.
If $u_1(M_1) \leq u_1(M_2 \cup C)$ holds, agent 1 is indifferent between bundles $A_1$ and $A_2$, so either $A_1$ or $A_2$ makes her EF (thus EFM).
In the case that $u_1(M_1) > u_1(M_2 \cup C)$ holds, agent 1 is EF if she receives $A_1$, and is EFM if she gets $A_2$ because $A_1$ consists of only indivisible goods and there exists some good $g$ in $A_1$ such that $u_1(A_2) \geq u_1(M_2) \geq u_1(A_1 \setminus \{g\})$.

\medskip

\noindent\textbf{Polynomial time.}
To obtain the initial partition $(M_1, M_2)$, we can let two copies of agent 1 run the round-robin algorithm on the indivisible items.
This step can be done easily in polynomial time.
The {remaining} steps only take constant running time and a constant number of RW queries.
The conclusion follows.
\end{proof}

\paragraph{A Stronger EFM Notion}
With two agents, an \emph{envy-freeness up to any good (EFX)} allocation, in which no agent prefers the bundle of another agent following the removal of \emph{any} single good, always exists~\citep{PlautRo20}.
This result can be carried over to show the existence of a stronger EFM notion in the mixed goods setting, in which an agent is EFX towards any agent with only indivisible goods, and EF towards the rest.
Such an allocation can be obtained by using an EFX partition (with respect to agent 1) instead of an EF1 partition in Step~\ref{2AgentALG-EF1} of Algorithm~\ref{alg:2Agents}.
Moreover, with any number of agents, whenever an EFX allocation exists among indivisible goods,\footnote{EFX exists for three agents~\citep{ChaudhuryGaMe20} or $n$ agents with identical valuations~\citep{PlautRo20}, but the existence of EFX remains open for four or more agents with additive valuations.}
we can start with such an EFX allocation in Step~\ref{EFhalfALG-EF1Allocation} of Algorithm~\ref{alg:EFhalf}.
The cake-adding phase maintains the EFM condition and does not introduce new envy.
Thus Algorithm~\ref{alg:EFhalf} will also produce an allocation with this stronger notion of EFM.

\subsection{Any Number of Agents with Piecewise Linear Functions}
In the second case, we consider an arbitrary number of agents when agents' valuation functions over the cake are \emph{piecewise linear}.

\begin{definition}
A valuation density function $f_i$ is \emph{piecewise linear} if the interval $[0, 1]$ can be partitioned into a finite number of intervals such that $f_i$ is linear on each interval.
\end{definition}

Piecewise linear function is a generalization of both \emph{piecewise uniform function} and \emph{piecewise constant function}, each of which has been considered in several previous fair division works~\citep{BeiChHu12,ChenLaPa13,BeiHuSu20}.
In this case, we do not use the RW model, but rather assume that the valuation functions are provided to us in \emph{full information}.

The only obstacle in converting Algorithm~\ref{alg:EFhalf} to a bounded protocol is the implementation of the perfect allocation oracle for cake cutting.
{When} agents have piecewise linear functions, \citet{ChenLaPa13} showed that a perfect allocation can be computed efficiently in polynomial time.
This fact, combined with Theorem~\ref{thm:existenceOfEFhalf}, directly implies the following result.

\begin{corollary}
For any number of agents with piecewise linear density functions over the cake, an EFM allocation can be computed in polynomial time.
\end{corollary}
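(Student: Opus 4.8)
The plan is to observe that this corollary follows almost immediately by combining Theorem~\ref{thm:existenceOfEFhalf} with the known polynomial-time perfect allocation result for piecewise linear valuations, so the only real work is to verify that every step of Algorithm~\ref{alg:EFhalf} can be carried out in polynomial time under the full-information piecewise linear model. First I would recall that Theorem~\ref{thm:existenceOfEFhalf} already establishes that Algorithm~\ref{alg:EFhalf} computes an \EFhalf allocation using only $O(n^3)$ calls to the $\perfect$ oracle together with an amount of additional bookkeeping that is polynomial in $n$ and $m$ (this is exactly the content of Lemma~\ref{lem:EFhalfTermination}, which bounds the overall running time by $O(m^2 + n^6)$ plus the oracle calls). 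The correctness of the output as an \EFhalf allocation is independent of how the oracle is implemented, so it carries over verbatim; what remains is purely a complexity statement.

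The key step is therefore to replace the abstract $\perfect(C', k)$ oracle with an explicit polynomial-time subroutine. For this I would invoke the result of \citet{ChenLaPa13}, who show that when all agents have piecewise linear density functions, a perfect allocation of a cake into $k$ equal-valued pieces can be computed in time polynomial in the input size (the number of linear pieces, $n$, and $k$). I would note that the pieces of cake $C'$ passed to the oracle during the run of Algorithm~\ref{alg:EFhalf} are always finite unions of intervals whose endpoints are determined by cut queries against the piecewise linear densities; hence each $C'$ is itself described by a piecewise linear valuation profile of polynomially bounded complexity, so the preconditions of the \citet{ChenLaPa13} algorithm are met at every invocation. Since the oracle is called $O(n^3)$ times and each call now runs in polynomial time, the total contribution of the oracle calls is polynomial.

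I would then assemble the pieces: the $O(m^2 + n^6)$ cost of the envy-graph construction, addable-set computation, envy-cycle elimination, and cake-selection steps (Steps~\ref{EFhalfALG-deltai}--\ref{EFhalfALG-cakeEnd}, where the cut points $x_i$ are computed exactly since inverting a piecewise linear density to solve $u_i([a,x_i]) = |S|\cdot\delta_i$ reduces to solving a quadratic on the active linear segment), combined with the polynomial per-call cost of the perfect allocation subroutine, yields an overall polynomial running time. The main obstacle I anticipate is not conceptual but one of careful verification: I must confirm that the intermediate quantities arising in the algorithm — the utility differences $\delta_i$, the cut points $x_i$, and the boundaries of the residual cake $C$ — remain representable and computable to the required precision under exact arithmetic on piecewise linear functions, so that feeding $C'$ into the \citet{ChenLaPa13} subroutine is legitimate and the output is genuinely a perfect allocation rather than an approximate one. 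Once this representational closure is checked, the corollary follows directly.
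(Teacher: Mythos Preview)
Your proposal is correct and follows exactly the same approach as the paper: the paper's own argument is simply that the perfect allocation oracle is the only obstacle to a bounded protocol, and since \citet{ChenLaPa13} give a polynomial-time perfect allocation procedure for piecewise linear valuations, plugging this into Theorem~\ref{thm:existenceOfEFhalf} yields the corollary directly. The extra care you take regarding representational closure of the intermediate cake pieces and cut points is more detail than the paper provides, but it is consistent with and merely elaborates on the paper's one-line justification.
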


\section{$\epsilon$-EFM: Algorithm}\label{sec:approxEFM}
In this section, we focus on $\epsilon$-EFM, a relaxation of the EFM condition.
Despite the computational issues with finding bounded exact EFM protocols, we will show that there is an efficient algorithm in the RW model that computes an $\epsilon$-EFM allocation for general density functions with running time polynomial in $n$, $m$ and $1/\epsilon$.

Since the difficulty in finding a bounded EFM protocol in the RW model lies in computing perfect allocations of a cake (Section~\ref{sec:EFMExistence}), one might be tempted to simply use a bounded \emph{$\epsilon$-Perfect Allocation} protocol to replace the exact procedure. Here a partition $\mathcal{C} = (C_1, C_2, \dots, C_k)$ of cake $C$ is said to be \emph{$\epsilon$-perfect} if for all $i \in N$, $j \in [k]$, $|u_i(C_j) - u_i(C)/k| \leq \epsilon$.
However, although a bounded $\epsilon$-perfect protocol exists in the RW model~\citep{RobertsonWe98,BranzeiNi17}, all known protocols {have} running time exponential in $1/\epsilon$.\footnote{\citet{BranzeiNi17} showed that an $\epsilon$-perfect allocation can be computed in $O(n^3/\epsilon)$ RW queries. However, although the query complexity is polynomial, the protocol still requires an exponential \emph{running time} because it finds the correct partition of the small pieces into bundles via an exhaustive enumeration, of which no polynomial time algorithm is known.}
It is still an open question to find an $\epsilon$-perfect allocation with both query and time complexity polynomial in $1/\epsilon$.
Therefore, to design an efficient $\epsilon$-EFM protocol, extra work needs to be done to circumvent this issue.

We next define the relaxed version of EF and envy graph.

\begin{definition}[$\epsilon$-EF]
An allocation $\mathcal{A}$ is said to satisfy \emph{$\epsilon$-envy-freeness ($\epsilon$-EF)} if for all agents $i, j \in N$, $u_i(A_i) \geq u_i(A_j) - \epsilon$.
\end{definition}

\begin{definition}[$\epsilon$-envy graph]\label{def:epsEnvyGraph}
Given an allocation $\mathcal{A}$ and a parameter $\epsilon$, the $\epsilon$-envy graph is defined as $G(\epsilon) = (N, E_{\epsilon\text{-envy}} \cup E_{\epsilon\text{-eq}})$, where every vertex represents an agent, and $E_{\epsilon\text{-envy}}$ and $E_{\epsilon\text{-eq}}$ consist of the following two types of edges, respectively:
\begin{itemize}
\item $\epsilon$-envy edge: $i \epsEnvyArrow j$ if $u_i(A_i) < u_i(A_j) - \epsilon$;
\item $\epsilon$-equality edge: $i \epsEqArrow j$ if $u_i(A_j) - \epsilon \leq u_i(A_i) \leq u_i(A_j)$.
\end{itemize}
\end{definition}

Given an $\epsilon$-envy graph, a cycle is said to be an \emph{$\epsilon$-envy cycle} if it contains at least one $\epsilon$-envy edge.
We also note that when $\epsilon = 0$, the $\epsilon$-envy graph degenerates into the envy graph defined in Section \ref{sec:EFMExistence}.

\subsection{The Algorithm}
The complete algorithm to compute an $\epsilon$-EFM allocation is shown in Algorithm~\ref{alg:epsEFhalf}.
Similarly to Algorithm~\ref{alg:EFhalf}, our algorithm adds resources to the partial allocation iteratively.
We always maintain the partial allocation to be $\hat{\epsilon}$-EFM where $\hat{\epsilon}$ is updated increasingly and would never exceed $\epsilon$.
This will ensure that the final allocation is $\epsilon$-EFM.

\begin{algorithm}[ht!]
\caption{$\epsilon$-EFM Algorithm}
\label{alg:epsEFhalf}
\begin{algorithmic}[1]
\REQUIRE Agents $N$, indivisible goods $M$, cake $C$, and parameter $\epsilon$.

\STATE $\hat{\epsilon} \leftarrow \frac{\epsilon}{4}$, {$\eferror \leftarrow \frac{\epsilon^2}{8n}$} \label{epsEFhalfALG-initialization}
\STATE Find an arbitrary EF1 allocation $(A_1, A_2, \dots, A_n)$ of $M$ to $n$ agents. \label{epsEFhalfALG-EF1Allocation}
\STATE Construct an $\hat{\epsilon}$-envy graph $G(\hat{\epsilon}) = (N, E_{\hat{\epsilon}\text{-envy}} \cup E_{\hat{\epsilon}\text{-eq}})$ accordingly. \label{epsEFhalfALG-initialEpsEnvyGraph}

\WHILE {$C \neq \emptyset$} \label{epsEFhalfALG-remainingCake}
	\IF {there exists an addable set $S$} \label{epsEFhalfALG-addCakeBegin}
		\STATE // cake-adding phase
		\IF {$S = N$} \label{epsEFhalfALG-S=N}
			\STATE Let $(C_1, C_2, \dots, C_n) = \epsEF{\frac{\epsilon}{4}}(C, N)$. \label{epsEFhalf-epsEF}
			\STATE $C \leftarrow \emptyset$
			\STATE $\hat{\epsilon} \leftarrow \hat{\epsilon} + \epsilon/4$ \label{epsEFhalfALG-hatEpsS=N}
			\STATE Add $C_i$ to bundle $A_i$ for all $i \in N$. \label{epsEFhalfALG-EpsUpdate}
		\ELSE \label{epsEFhalfALG-SneqNBegin}
			\IF {$\max_{i \in N \setminus S} u_i(C) \leq \hat{\epsilon}$} \label{epsEFM-maxValForC}
				\STATE $C' \leftarrow C, C \leftarrow \emptyset$
			\ELSE
				\STATE Suppose w.l.o.g.~that $C = [a, b]$. For each agent $i \in N \setminus S$, if $u_i([a, b]) \geq \hat{\epsilon}$, let $x_i$ be a point such that $u_i([a, x_i]) = \hat{\epsilon}$; otherwise, let $x_i = b$. \label{epsEFhalfALG-cakeBegin}
				\STATE $i^* \leftarrow \arg\min_{i \in N \setminus S} x_i$
				\STATE $C' \leftarrow [a, x_{i^*}], C \leftarrow C \setminus C'$  \label{epsEFhalfALG-cakeEnd}
			\ENDIF
			\STATE {Let $(C_1, C_2, \dots, C_k) = \epsEF{\eferror}(C', S)$ where $k = |S|$.} \label{epsEFhalfALG-addCake}
			\STATE {$\hat{\epsilon} \leftarrow \hat{\epsilon} + \eferror$} \label{epsEFhalfALG-increaseEps}
			\STATE Add $C_i$ to the bundle of the $i$-th agent in $S$. \label{epsEFhalfALG-EpsPrimeUpdate}
			\STATE Update $\hat{\epsilon}$-envy graph $G(\hat{\epsilon})$ accordingly. \label{epsEFhalfALG-SneqNEnd}
		\ENDIF \label{epsEFhalfALG-addCakeEnd}
	\ELSE \label{epsEFhalfALG-envyCycleBegin}
		\STATE // envy-cycle-elimination phase
		\STATE Let $T$ be an $\hat{\epsilon}$-envy cycle in the $\hat{\epsilon}$-envy graph.
		\STATE For each agent $j \in T$, give agent $j$'s whole bundle to agent $i$ who points to her in $T$. \label{epsEFhalfALG-cycleElimination}
		\STATE Update $\hat{\epsilon}$-envy graph $G(\hat{\epsilon})$ accordingly.
	\ENDIF \label{epsEFhalfALG-envyCycleEnd}
\ENDWHILE \label{epsEFhalfALG-remainingEnd}

\RETURN $(A_1, A_2, \dots, A_n)$ \label{epsEFhalfALG-lastReturn}
\end{algorithmic}
\end{algorithm}

Like Algorithm~\ref{alg:EFhalf}, starting with an EF1 allocation of indivisible goods to all agents in Step~\ref{epsEFhalfALG-EF1Allocation}, Algorithm~\ref{alg:epsEFhalf} then executes in rounds (Steps \ref{epsEFhalfALG-remainingCake}-\ref{epsEFhalfALG-remainingEnd}).
Even though each round still executes either a cake-adding phase or an envy-cycle-elimination phase, the execution details are different from Algorithm~\ref{alg:EFhalf}.

\begin{itemize}
\item In the cake-adding phase, instead of allocating some cake to an addable set $S$ in a way that is perfect, we resort to a {$\eferror$-EF} allocation, where {$\eferror$} will be fixed later in Algorithm~\ref{alg:epsEFhalf}.
In the following, we will utilize an algorithm {$\epsEF{\eferror}(C, S)$} that could return us a {$\eferror$-EF} allocation for any set of agents $S$ and cake $C$.
Note that, for any $\bar{\epsilon} > 0$, the algorithm $\epsEF{\bar{\epsilon}}$ can be implemented with both running time and query complexity polynomial in the number of agents involved and $1/\bar{\epsilon}$~\citep{Procaccia16}.
We also update $\hat{\epsilon}$ to a larger number, say {$\hat{\epsilon} + \eferror$}, in order to avoid generating $\hat{\epsilon}$-envy edges due to cake-adding.

\item In the envy-cycle-elimination phase, we eliminate an $\hat{\epsilon}$-envy cycle, instead of an envy cycle, by rearranging the current partial allocation.
\end{itemize}


\subsection{Analysis}
Our main result for the $\epsilon$-EFM allocation is as follows:

\begin{theorem}\label{thm:epsEFM}
An $\epsilon$-EFM allocation can be found by Algorithm~\ref{alg:epsEFhalf} with running time $O(n^4/\epsilon + m n \log m)$, $O(n^3/\epsilon)$ Robertson-Webb queries,
and $O(n/\epsilon)$ calls to the approximate $\textsc{EFAlloc}$ oracle.
\end{theorem}

To prove Theorem~\ref{thm:epsEFM}, we first show that each round (Steps~\ref{epsEFhalfALG-remainingCake}-\ref{epsEFhalfALG-remainingEnd} in Algorithm~\ref{alg:epsEFhalf}) maintains the following invariants during the run of the algorithm.

\vbox{
\hrule
\paragraph{Invariants}
\begin{enumerate}[label=\textbf{B\arabic*.},ref=B\arabic*]
\item In each round there is either an addable set for the cake-adding phase or an $\hat{\epsilon}$-envy cycle for the envy-cycle-elimination phase. \label{item:addableSetOrepsEnvyCycle}
\item The partial allocation is always $\hat{\epsilon}$-EFM with the current $\hat{\epsilon}$. \label{item:partial-epsHatEFM}
\end{enumerate}
\hrule
}

We next prove these invariants in the following.

\begin{lemma}
Invariant~\ref{item:addableSetOrepsEnvyCycle} holds during the algorithm's run.
\end{lemma}
\begin{proof}
The proof is similar to the proof of Lemma \ref{lem:envy-cycle-addable-set}, except that we consider the $\hat{\epsilon}$-envy edge instead of the envy edge.
\end{proof}

\begin{lemma}\label{lem:alwaysepsEFhalf}
Invariant~\ref{item:partial-epsHatEFM} holds during the algorithm's run.
\end{lemma}
\begin{proof}
First, it is worth noting that at the beginning, when indivisible goods are allocated, the allocation is EF1 and therefore EFM.
We then note that the partial allocation is only updated in Steps~\ref{epsEFhalfALG-EpsUpdate} and~\ref{epsEFhalfALG-EpsPrimeUpdate} in the cake-adding phase as well as in Step~\ref{epsEFhalfALG-cycleElimination} in the envy-cycle-elimination phase.
Given a partial allocation that is $\hat{\epsilon}$-EFM, we will show that each of these updates maintains $\hat{\epsilon}$-EFM with the updated $\hat{\epsilon}$, which completes the proof of Lemma~\ref{lem:alwaysepsEFhalf}.
We note that $\hat{\epsilon}$ is only updated in the cake-adding phase and is non-decreasing during the algorithm's run.

For analysis in the envy-cycle-elimination phase, the proof is identical to that of Lemma~\ref{lem:alwaysEFhalf} in the case of envy-cycle-elimination phase.

We then discuss the cases in the cake-adding phase.
For the updated partial allocation in Step~\ref{epsEFhalfALG-EpsUpdate}, we allocate the remaining cake $C$ in a way that is $\frac{\epsilon}{4}$-EF which implies that $u_i(C_i) \geq u_i(C_j) - \epsilon / 4$ holds for any pair of agents $i, j \in N$.
Given the partial allocation that is $\hat{\epsilon}$-EFM, we have
\begin{equation*}
u_i(A_i \cup C_i) \geq u_i(A_j \cup C_j) - \hat{\epsilon} -\epsilon/4.
\end{equation*}
Thus it is clear that no $\hat{\epsilon}$-envy edge will be generated among agents in $S$ if we update $\hat{\epsilon}$ to $\hat{\epsilon} + \epsilon / 4$ in Step~\ref{epsEFhalfALG-hatEpsS=N}.

For the updated partial allocation in Step~\ref{epsEFhalfALG-EpsPrimeUpdate}, we allocate some cake $C'$ to agents in $S$ in a way that is {$\eferror$-EF}.
By a similar argument to the case above, we know that no $\hat{\epsilon}$-envy edge will be generated among agents in $S$ if we update $\hat{\epsilon}$ to {$\hat{\epsilon} + \eferror$}.
Then, for any agent $i \in N \setminus S$, we have $u_i(A_i) > u_i(A_j)$ where $j \in S$.
As $u_i(C') \leq \hat{\epsilon}$, we have
\begin{equation*}
u_i(A_i) > u_i(A_j \cup C'_j) - \hat{\epsilon},
\end{equation*}
where $C'_j$ is the piece of cake allocated to agent $j \in S$.
It means that again no $\hat{\epsilon}$-envy edge will be generated from $N \setminus S$ to $S$ when we update $\hat{\epsilon}$ to {$\hat{\epsilon} + \eferror$}.
Last, it is obvious that for any pair of agents in $N \setminus S$, we do not generate any {$(\hat{\epsilon} + \eferror)$-envy edge} because the bundles of these agents remain the same.
We note that $\hat{\epsilon}$ is updated to {$\hat{\epsilon} + \eferror$} in Step~\ref{epsEFhalfALG-increaseEps} in order to make sure that there is no introduced $\hat{\epsilon}$-envy edge in the updated $\hat{\epsilon}$-envy graph.
\end{proof}

\paragraph{Correctness}
\begin{lemma}\label{lem:valSumDecrease}
In the cake-adding phase, if the piece of cake to be allocated is not the whole remaining cake, the sum of all agents' valuations on the remaining cake decreases by at least $\hat{\epsilon}$.
\end{lemma}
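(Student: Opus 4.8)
The plan is to directly compute the amount by which the total valuation of the remaining cake drops during the round, and then lower-bound it using the defining property of the cut point $x_{i^*}$. By hypothesis the allocated piece is not the whole remaining cake, so the algorithm must have entered the \textbf{else} branch in Steps~\ref{epsEFhalfALG-cakeBegin}--\ref{epsEFhalfALG-cakeEnd}; in particular $S \neq N$, the set $N \setminus S$ is nonempty, so $i^*$ is well-defined, and the piece removed is $C' = [a, x_{i^*}]$, where $C = [a,b]$ denotes the remaining cake at the start of the round.

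First I would observe that, by additivity of each agent's value over disjoint subintervals, the decrease in the quantity $\sum_{i \in N} u_i(C)$ caused by deleting $C'$ from $C$ is exactly
\[
\sum_{i \in N} u_i([a,b]) - \sum_{i \in N} u_i([x_{i^*}, b]) = \sum_{i \in N} u_i(C').
\]
Since every $u_i$ is nonnegative, this sum is at least its single term $u_{i^*}(C')$ corresponding to agent $i^*$.

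Finally I would invoke the definition of the cut points used in Step~\ref{epsEFhalfALG-cakeBegin}: for each $i \in N \setminus S$ the point $x_i$ is chosen so that $u_i([a, x_i]) = \hat{\epsilon}$, and since $i^* \in N \setminus S$ we have $u_{i^*}(C') = u_{i^*}([a, x_{i^*}]) = \hat{\epsilon}$. Chaining this with the previous display shows that the total valuation of the remaining cake decreases by at least $\hat{\epsilon}$, as claimed.

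There is no substantive obstacle here; the statement reduces to a one-line estimate once the relevant branch is identified. The only points requiring care are bookkeeping ones: confirming that the excluded case $C' = C$ is precisely the branch we may ignore, that $N \setminus S \neq \emptyset$ whenever $S \neq N$ so that $i^*$ exists, and that nonnegativity of the valuations is exactly what licenses discarding every term in the sum except the one indexed by $i^*$.
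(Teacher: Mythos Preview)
Your proposal is correct and follows essentially the same approach as the paper's proof: both identify that the hypothesis forces the algorithm into Steps~\ref{epsEFhalfALG-cakeBegin}--\ref{epsEFhalfALG-cakeEnd}, where by construction $u_{i^*}(C') = \hat{\epsilon}$, and conclude immediately. Your version is simply more explicit about the bookkeeping (additivity, nonnegativity, and why $N \setminus S \neq \emptyset$), whereas the paper compresses the argument into a single sentence.
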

\begin{proof}
We consider the cake-adding phase in Steps~\ref{epsEFhalfALG-addCakeBegin}-\ref{epsEFhalfALG-addCakeEnd}.
If the piece of cake $C'$ to be allocated is not the whole remaining cake, there exists an agent $i^*$ in $N \setminus S$ such that $u_{i^*}(C') = \hat{\epsilon}$ according to Steps~\ref{epsEFhalfALG-cakeBegin}-\ref{epsEFhalfALG-cakeEnd}.
Thus the lemma follows.
\end{proof}

We are now ready to show the correctness of Algorithm~\ref{alg:epsEFhalf}.

\begin{lemma}\label{lem:epsEFhalfAllocation}
Algorithm~\ref{alg:epsEFhalf} always returns an $\epsilon$-EFM allocation upon termination.
\end{lemma}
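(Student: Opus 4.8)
The plan is to mirror the structure of Lemma~\ref{lem:EFhalfAllocation}: use Invariant~B2 to conclude that the returned allocation is $\hat{\epsilon}$-\EFhalf for the terminal value of $\hat{\epsilon}$, argue that all goods are in fact allocated on termination, and then---the only genuinely new ingredient---show that $\hat{\epsilon}$ never exceeds $\epsilon$. Since $\hat{\epsilon}$-\EFhalf with $\hat{\epsilon}\le\epsilon$ immediately implies $\epsilon$-\EFhalf (the divisible-goods slack $u_i(A_i)\ge u_i(A_j)-\hat\epsilon\ge u_i(A_j)-\epsilon$ only weakens as $\hat\epsilon$ shrinks toward its bound, and the EF1 clause of Definition~\ref{def:epsEFhalf} is never relaxed), these three facts together yield the claim.

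Allocation completeness is immediate and follows exactly as in Lemma~\ref{lem:EFhalfAllocation}: all indivisible goods are distributed in Step~\ref{epsEFhalfALG-EF1Allocation}, and the while-loop guard $C\neq\emptyset$ forces the cake to be fully allocated on exit.

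The heart of the argument is the bound $\hat{\epsilon}\le\epsilon$, which I would establish by accounting for every increment of $\hat\epsilon$. The quantity starts at $\epsilon/4$ (Step~\ref{epsEFhalfALG-initialization}) and is increased only in the cake-adding phase---by $\epsilon/4$ in Step~\ref{epsEFhalfALG-hatEpsS=N} when $S=N$, and by $\epsilon'=\epsilon^2/(8n)$ in Step~\ref{epsEFhalfALG-increaseEps} when $S\neq N$; envy-cycle-elimination phases leave $\hat\epsilon$ untouched. I would then classify the cake-adding phases. Any phase that empties the cake (the branch setting $C\leftarrow\emptyset$, which occurs both when $S=N$ and when $\max_{i\in N\setminus S}u_i(C)<\hat\epsilon$) ends the loop, so there is exactly one such \emph{final} phase, and it raises $\hat\epsilon$ by at most $\epsilon/4$ (it is either an $S=N$ phase adding $\epsilon/4$, or a whole-cake $S\neq N$ phase adding $\epsilon'\le\epsilon/4$). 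Every remaining cake-adding phase is therefore an $S\neq N$ phase that cuts off a proper piece, so Lemma~\ref{lem:valSumDecrease} applies and each drops $\sum_{i\in N}u_i(\text{remaining cake})$ by at least $\hat\epsilon\ge\epsilon/4$. As this sum begins at $\sum_i u_i(C)\le n$ and stays nonnegative, there are at most $4n/\epsilon$ such phases, contributing at most $(4n/\epsilon)\cdot\epsilon'=\epsilon/2$. Summing the three sources,
\[
\hat\epsilon \;\le\; \frac{\epsilon}{4}+\frac{\epsilon}{2}+\frac{\epsilon}{4} \;=\; \epsilon ,
\]
which is exactly the bound required.

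The main obstacle is keeping this bookkeeping tight enough to land at $\epsilon$ rather than some constant multiple of it. Two points demand care: (i) recognizing that $S=N$ phases and whole-cake $S\neq N$ phases are always terminal, so that they are charged only once instead of being lumped in with the repeated partial-piece phases (it is this separation that keeps the ``$+1$'' from spoiling the estimate); and (ii) the precise calibration $\epsilon'=\epsilon^2/(8n)$, which is exactly what makes $(4n/\epsilon)\cdot\epsilon'$ equal $\epsilon/2$. I would also flag the uniform lower bound $\hat\epsilon\ge\epsilon/4$ used in the phase count, which holds because $\hat\epsilon$ is initialized to $\epsilon/4$ in Step~\ref{epsEFhalfALG-initialization} and is nondecreasing throughout.
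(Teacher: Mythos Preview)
Your proposal is correct and follows essentially the same approach as the paper: invoke Invariant~B2, observe allocation completeness from the while-loop guard, and bound the terminal $\hat\epsilon$ by counting partial-piece cake-adding phases via Lemma~\ref{lem:valSumDecrease} to arrive at $\epsilon/4 + (4n/\epsilon)\cdot\epsilon' + \epsilon/4 = \epsilon$. Your bookkeeping is in fact slightly more careful than the paper's in explicitly separating out the terminal whole-cake $S\neq N$ case, but the argument is otherwise identical.
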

\begin{proof}
By Invariant~\ref{item:partial-epsHatEFM}, it suffices to prove that all goods are allocated and $\hat{\epsilon}$ is at most $\epsilon$ when Algorithm~\ref{alg:epsEFhalf} terminates.
All indivisible goods are allocated in Step \ref{epsEFhalfALG-EF1Allocation}.
Then the \texttt{while} loop (Steps~\ref{epsEFhalfALG-remainingCake}-\ref{epsEFhalfALG-remainingEnd}) terminates only when the cake is also fully allocated, as desired.

We now turn our attention to $\hat{\epsilon}$.
First, $\hat{\epsilon}$ is initialized to be $\epsilon/4$ and never decreases during the algorithm's run.
If the whole remaining cake is allocated, there is at most one execution of the cake-adding phase (Steps~\ref{epsEFhalfALG-S=N}-\ref{epsEFhalfALG-EpsUpdate}).
Moreover, $\hat{\epsilon}$ is increased by $\epsilon/4$ in Step~\ref{epsEFhalfALG-hatEpsS=N}.
We will show later in this proof that this increment would not let $\hat{\epsilon}$ exceed $\epsilon$.
We then focus on the case where the remaining cake is not fully allocated.
There are at most $n/\hat{\epsilon} \leq 4n/\epsilon$ executions of the cake-adding phase (Steps~\ref{epsEFhalfALG-SneqNBegin}-\ref{epsEFhalfALG-SneqNEnd}) according to Lemma~\ref{lem:valSumDecrease} and the fact that agents' utilities are normalized to 1.
In addition, $\hat{\epsilon}$ is increased by {$\eferror$} in each cake-adding phase in Step~\ref{epsEFhalfALG-increaseEps}.
{Thus, $\hat{\epsilon}$ is upper bounded by $\epsilon/4 + 4n/\epsilon \cdot \eferror + \epsilon/4 = \epsilon$ due to $\eferror = \frac{\epsilon^2}{8n}$.}
It follows that the final allocation is $\epsilon$-EFM.
\end{proof}

\paragraph{Termination and Time Complexity}
\begin{lemma}\label{lem:socialWelfareIncrease}
In the envy-cycle-elimination phase, the social welfare $\sum_{i \in N} u_i(A_i)$, increases by at least $\hat{\epsilon}$.
\end{lemma}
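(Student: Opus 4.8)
The plan is to track how the social welfare changes when the bundles are cyclically shifted along the $\hat{\epsilon}$-envy cycle $T$. First I would observe that this phase only rearranges bundles among agents in $T$, leaving every agent in $N \setminus T$ with the same bundle; hence such agents contribute nothing to the change in $\sum_{i \in N} u_i(A_i)$, and it suffices to account for the agents in $T$. For each agent $i \in T$, let $\sigma(i)$ denote the agent that $i$ points to in $T$; Step~\ref{epsEFhalfALG-cycleElimination} gives $i$ the bundle $A_{\sigma(i)}$, so $i$'s utility changes from $u_i(A_i)$ to $u_i(A_{\sigma(i)})$.

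Since $T$ is a directed cycle, $\sigma$ is a (cyclic) permutation of the agents in $T$, so the reassignment hands out each bundle of $T$ to exactly one agent and no bundle is double-counted. I would therefore write the change in social welfare as a sum over the edges of $T$:
\[
	\Delta \;=\; \sum_{i \in T} \bigl( u_i(A_{\sigma(i)}) - u_i(A_i) \bigr).
\]
Each edge of $T$ is either an $\hat{\epsilon}$-equality edge $i \epsEqArrow \sigma(i)$, for which Definition~\ref{def:epsEnvyGraph} gives $u_i(A_{\sigma(i)}) - u_i(A_i) \geq 0$, or an $\hat{\epsilon}$-envy edge $i \epsEnvyArrow \sigma(i)$, for which it gives $u_i(A_{\sigma(i)}) - u_i(A_i) > \hat{\epsilon}$.

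Finally, because $T$ is an $\hat{\epsilon}$-envy cycle it contains at least one $\hat{\epsilon}$-envy edge, so at least one summand in $\Delta$ exceeds $\hat{\epsilon}$ while all remaining summands are nonnegative; thus $\Delta > \hat{\epsilon}$, which yields the claimed increase. I do not anticipate a genuine obstacle here, as the argument is essentially bookkeeping; the only point requiring care is that the reassignment along a cycle is a bijection on the bundles of $T$, so that the change in welfare is captured exactly by the edge sum above with no term double-counted.
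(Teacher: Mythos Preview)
Your argument is correct and mirrors the paper's own proof: both observe that agents outside $T$ are unaffected, that every agent in $T$ is weakly better off after the cyclic shift (by the definition of $\hat{\epsilon}$-equality and $\hat{\epsilon}$-envy edges), and that the presence of at least one $\hat{\epsilon}$-envy edge yields a strict gain exceeding $\hat{\epsilon}$. Your write-up is slightly more explicit in formalizing the shift via the permutation $\sigma$ and the edge-sum $\Delta$, but the underlying reasoning is identical.
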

\begin{proof}
We eliminate an $\hat{\epsilon}$-envy cycle $T$ which contains at least one $\hat{\epsilon}$-envy edge in the envy-cycle-elimination phase (Steps~\ref{epsEFhalfALG-envyCycleBegin}-\ref{epsEFhalfALG-envyCycleEnd}).
Step~\ref{epsEFhalfALG-cycleElimination} eliminates the cycle by giving agent $j$'s bundle to agent $i$ for each edge {$i \xrightarrow{\hat{\epsilon}\text{-ENVY}} j$ or $i \xrightarrow{\hat{\epsilon}\text{-EQ}} j$} in cycle $T$.
None of the agents involved in $T$ is worse off and at least one agent in $T$ is better off by at least $\hat{\epsilon}$ by the definition of $\hat{\epsilon}$-envy edge in Definition \ref{def:epsEnvyGraph}.
Since agents outside cycle $T$ do not change their bundles, we complete the proof.
\end{proof}

\begin{lemma}\label{lem:epsEFhalfTime}
Algorithm~\ref{alg:epsEFhalf} has running time $O(n^4/\epsilon + m n \log m)$ and invokes
$O(n/\epsilon)$ calls to the approximate $\textsc{EFAlloc}$ oracle and $O(n^3/\epsilon)$ Robertson-Webb queries.
\end{lemma}
\begin{proof}
Since most parts of Algorithm~\ref{alg:epsEFhalf} are similar to those in Algorithm~\ref{alg:EFhalf}, and we have discussed their time complexities in the proof of Lemma~\ref{lem:EFhalfTermination}, we will focus on the steps that affect the time complexity for Algorithm~\ref{alg:epsEFhalf} in this proof.

Similar to Algorithm~\ref{alg:EFhalf}, Algorithm~\ref{alg:epsEFhalf} takes $O(m n \log m + n^2)$ time to perform Steps~\ref{epsEFhalfALG-EF1Allocation} and~\ref{epsEFhalfALG-initialEpsEnvyGraph}.
Afterwards, by Invariant~\ref{item:addableSetOrepsEnvyCycle}, Algorithm~\ref{alg:epsEFhalf} executes either a cake-adding phase or an envy-cycle-elimination phase in each round, and it takes $O(n^3)$ time to check which phase to go into each time.
Next, recall that agents' utilities are normalized to 1 and $\hat{\epsilon}$ is always no less than $\epsilon/4$.
This means there are at most $O(n/\epsilon)$ cake-adding rounds by Lemma~\ref{lem:valSumDecrease} and at most $O(n/\epsilon)$ envy-cycle-elimination rounds by Lemma~\ref{lem:socialWelfareIncrease}.

In the following, we discuss the steps in each phase in details.

\begin{description}
\item[Cake-adding phases] When we have $S = N$ in Step~\ref{epsEFhalfALG-S=N}, we invoke $\epsEF{\frac{\epsilon}{4}}$ once, use $O(n)$ time to update the allocation, and terminate the algorithm.

To determine the piece of cake $C'$ to be allocated later, we need $O(n)$ evaluation queries in Step~\ref{epsEFM-maxValForC}, and $O(n)$ cut queries in Step~\ref{epsEFhalfALG-cakeBegin} if the condition check in Step~\ref{epsEFM-maxValForC} fails.
{In each cake-adding phase, we invoke the $\epsEF{\eferror}$ oracle with $\eferror = \frac{\epsilon^2}{8n}$ once (Step~\ref{epsEFhalfALG-addCake}).}
In order to update agents' valuations for each bundle, we invoke $O(n^2)$ evaluation queries to obtain all agents' valuations of all pieces in $(C_1, C_2, \dots, C_k)$, and then use $O(n^2)$ time to update the envy graph.

Summarizing everything, we conclude that Algorithm~\ref{alg:epsEFhalf} makes $O(n^3/\epsilon)$ Robertson-Webb queries, and $O(n/\epsilon)$ calls to the $\epsEF{\frac{\epsilon^2}{8n}}$ oracle in total in all cake-adding phases.

\item[Envy-cycle-elimination phases] Since we are keeping track of all agents' valuations for all bundles, it takes no Robertson-Webb queries and $O(n^2)$ time to rearrange the bundles as well as update the $\hat{\epsilon}$-envy graph. Overall all envy-cycle-elimination phases take $O(n^3/\epsilon)$ running time.
\end{description}
The {remaining} steps can be implemented in time $O(n)$.
The overall time complexity of our algorithm is $O(n/\epsilon \cdot n^3 + n/\epsilon \cdot n^2 + m n \log m + n^2) = O(n^4/\epsilon + m n \log m)$. The overall Robertson-Webb query complexity is $O(n^3/\epsilon)$, and the number of calls to the {$\epsEF{\eferror}$} oracle is $O(n/\epsilon)$.
\end{proof}

Finally the correctness of Theorem~\ref{thm:epsEFM} is directly implied by Lemmas~\ref{lem:epsEFhalfAllocation} and~\ref{lem:epsEFhalfTime}.

Note that at the end of Section~\ref{sec:EFMExistence}, we explained why {an} exact envy-free oracle may not be helpful to obtain an EFM allocation.
However, as we showed in this section, the approximate envy-free oracle does help to obtain {an} $\epsilon$-EFM allocation.
Lemma~\ref{lem:valSumDecrease} provides the key difference.
In particular, the error allowed in the $\epsilon$-EFM condition {ensures} that agents' welfare for the remaining cake is reduced by at least an amount of $\hat{\epsilon}$.
This claim, however, makes no sense when discussing exact envy-freeness.
Furthermore, Algorithm~\ref{alg:epsEFhalf} introduces additional error into the EFM condition on top of the error that comes from the approximate envy-free oracle.
As a result, even if Algorithm~\ref{alg:epsEFhalf} was paired with an exact envy-free oracle, it would still not produce an exact EFM allocation.

\section{EFM and Efficiency}\label{sec:EFM_PO}
In this section, we discuss how to combine EFM with efficiency considerations.
In particular, we focus on the well-studied efficiency notion of \emph{Pareto optimality}.

\begin{definition}[PO]
An allocation $\mathcal{A}$ is said to satisfy \emph{Pareto optimality (PO)} if there is no allocation $\mathcal{A}'$ that Pareto-dominates $\mathcal{A}$, i.e., {satisfies} $u_i(A'_i) \geq u_i(A_i)$ for all $i \in N$ and at least one inequality is strict.
\end{definition}

\begin{definition}[fPO~\citep{BarmanKrVa18}]
An allocation $\mathcal{A}$ is said to satisfy \emph{fractional Pareto optimality (fPO)} if it is not Pareto dominated by any fractional allocation.\footnote{{In a fractional allocation, an agent may get a fractional share of an \emph{indivisible} good.
We refer to~\citep{BarmanKrVa18} for its formal definition.}}
\end{definition}

As \citet{BarmanKrVa18} noted, an fPO allocation is also PO but not vice versa.
It is known that with divisible resources, an allocation that is both envy-free and PO always exists~\citep{Weller85}.
With indivisible goods, an allocation satisfying both EF1 and fPO (and hence PO) also exists~\citep{BarmanKrVa18}.
Perhaps to our surprise, in the following we show via a counter-example that with mixed types of goods, EFM and PO are no longer compatible.

\begin{example}[EFM is not compatible with PO]\label{ex:EFM-PO-notCompatible}
Consider an instance with two agents, one indivisible good, and one cake.
Agents' valuation functions are listed below.

\begin{center}
\begin{tabular}{|*{3}{c|}}
\hline
& Indivisible good & A cake $C = [0, 1]$ \\
\hline
Agent 1 & $0.6$ & $u_1(C) = 0.4$ with uniform density over $[0, 0.5]$ \\
\hline
Agent 2 & $0.6$ & $u_2(C) = 0.4$ with uniform density over $[0.5, 1]$ \\
\hline
\end{tabular}
\end{center}

It is obvious that in any EFM allocation, one agent will get the indivisible good and the entire cake has to be allocated to the other agent.
However, such an allocation cannot be PO since the agent with the cake has no value for half of it, and giving that half to the other agent would make that agent better off without making the first agent worse off.
\end{example}

This counter-example relies on the fact that in the definition of EFM, if some agent $i$'s bundle contains any positive amount of cake, another agent $j$ will compare her bundle to $i$'s bundle using the stricter EF condition, even if agent $j$ has value zero over $i$'s cake.
This may seem counter-intuitive, because when $j$ has no value over $i$'s cake, removing that cake from $i$'s bundle will not help eliminate agent $j$'s envy.
To this end, one may consider the following weaker version of EFM.

\begin{definition}[Weak EFM]\label{def:weakEFhalf}
An allocation $\mathcal{A}$ is said to satisfy \emph{weak \longEFM (weak EFM)}, if for any agents $i, j \in N$,
\begin{itemize}
\item if agent $j$'s bundle consists of indivisible goods, and
\begin{itemize}
\item either no divisible good,
\item or divisible goods that yield value 0 to agent $i$, i.e., $u_i(C_j) = 0$,
\end{itemize}
there exists an indivisible good $g \in A_j$ such that $u_i(A_i) \geq u_i(A_j \setminus \{g\})$;
\item otherwise, $u_i(A_i) \geq u_i(A_j)$.
\end{itemize}
\end{definition}

From the definition, it is easy to see that EFM implies weak EFM, which means all existence results for EFM established in Section~\ref{sec:EFMExistence} can be carried over to weak EFM.

This weaker version of EFM precludes the incompatibility result in Example~\ref{ex:EFM-PO-notCompatible}.
Nevertheless, we show in the following example that weak EFM is incompatible with fPO.

\begin{example}[(Weak) EFM is incompatible with fPO]\label{ex:fPO}
Consider an instance with two agents, one indivisible good and two homogeneous divisible goods.
Agents' valuation functions are listed below.

\begin{center}
\begin{tabular}{|*{4}{c|}}
\hline
& Indivisible good & Divisible good 1 & Divisible good 2 \\
\hline
Agent 1 & 2 & 1 & 2 \\
\hline
Agent 2 & 2 & 2 & 1 \\
\hline
\end{tabular}
\end{center}

Because the valuations are symmetric, we can assume without loss of generality that in an EFM allocation, the indivisible good is given to agent 1.
We also observe that in any EFM allocation, we cannot allocate all divisible goods to a single agent.
This means that both agents' bundles must contain some divisible good, which then implies that both agents need to be envy-free towards the other agent's bundle.
Next, via two simple linear programs one can compute the maximum utility of each agent in EFM allocations:
giving the indivisible good and one half of divisible good 2 to agent 1 gives her a maximum utility 3; giving divisible good 1 and three quarters of divisible good 2 to agent 2 gives her a maximum utility 2.75.
We note that the maximum utilities for the two agents are achieved under different allocations.
However, even putting these two maximum utilities together, it is dominated by the utilities guaranteed by the fractional allocation in which agent 1 gets divisible good 2 and half of the indivisible good while agent 2 gets divisible good 1 and the other half of the indivisible good, which will give both agents a utility of 3.
This means that any EFM allocation is not fPO in this problem instance.
\end{example}

Could there always exist an allocation that satisfies both weak EFM and PO?
We do not know the answer, and believe this is a very interesting open question.
One tempting approach to answer this open question is to consider the maximum Nash welfare (MNW) allocation.
This is the allocation that maximizes the Nash welfare $\prod_{i \in N} u_i(A_i)$ among all allocations.\footnote{In the case where the maximum Nash welfare is 0, an allocation is an MNW allocation if it gives positive utility to a set of agents of maximal size and moreover maximizes the product of utilities of the agents in that set.}
It has been shown that an MNW allocation enjoys many desirable properties in various settings.
In particular, an MNW allocation is always envy-free and PO in cake cutting~\cite{Segal-HaleviSz19}, and EF1 and PO for indivisible resource allocation~\cite{CaragiannisKuMo19}.
It is therefore a natural idea to conjecture that it also satisfies EFM and PO for mixed goods.
Unfortunately, this is not the case.
Here we give such a counter-example.

\begin{example}[MNW does not imply (weak) EFM]\label{ex:MNWnotImplyEFM}
Consider the following instance with two agents, two indivisible goods and one homogeneous cake.
Agents' valuation functions are listed below.

\begin{center}
\begin{tabular}{|*{4}{c|}}
\hline
& Indivisible good 1 & Indivisible good 2 & A homogeneous cake \\
\hline
Agent 1 & 0.4 & 0.4 & 0.2 \\
\hline
Agent 2 & 0.499 & 0.499 & 0.002 \\
\hline
\end{tabular}
\end{center}

We discuss the following cases to find the {MNW} allocation.
\begin{itemize}
\item When both indivisible goods are given to agent 1, giving the whole cake to agent 2 maximizes the Nash welfare, which is $(0.4 + 0.4) \times 0.002 = 0.0016$.
\item When both indivisible goods are given to agent 2, giving the whole cake to agent 1 maximizes the Nash welfare, which is $0.2 \times (0.499 + 0.499) = 0.1996$.
\item When each agent gets exactly one indivisible good,
in the Nash welfare maximizing allocation, denoted by $\mathcal{A}$, agent 1 receives an indivisible good and the entire cake, and agent 2 receives the other indivisible good.
The Nash welfare of $\mathcal{A}$ is $(0.4 + 0.2) \times 0.499 = 0.2994$.
This is also the overall MNW allocation for this instance.
\end{itemize}
However, allocation $\mathcal{A}$ is not {weak EFM}, because agent 1's bundle {contains} some cake that yields positive value to agent 2, and agent 2 is envious of agent 1.
It is also worth noting that there is a simple envy-free and PO allocation for this instance: each agent gets one indivisible good and one half of the cake, with Nash welfare $(0.4 + 0.1) \times (0.499 + 0.001) = 0.25$.
\end{example}

Note that the compatibility of weak EFM and PO remains an open question even for the special case with indivisible goods and a single homogeneous divisible good (e.g. money), even though this case is well-studied when there is enough money.


\section{Conclusion and Future Work}
This work is concerned with fair division of a mixture of divisible and indivisible goods.
To this end, we introduce the \longEFM (EFM) fairness notion, which generalizes both EF and EF1 to the mixed goods setting.
We show that an EFM allocation always exists for any number of agents.
We also provide bounded protocols to compute an EFM allocation in special cases, and an $\epsilon$-EFM allocation in the general setting in time poly$(n, m, 1/\epsilon)$.

It remains an important open question whether there exists a bounded, or even finite protocol in the RW model that computes an EFM allocation in the general setting for any number of players.
With regard to $\epsilon$-EFM, although our algorithm runs in time poly$(n, m, 1/\epsilon)$, it remains an open question to design an algorithm that runs in time poly$(n, m, \log(1/\epsilon))$.

Besides envy-freeness, one could also generalize other fairness notions to the mixed goods setting.
How well would this notion behave with mixed goods in terms of its existence and approximation?
Overall, we believe that fair division in the mixed goods setting encodes a rich structure and creates a new research direction that deserves to be pursued for future work.

\section*{Acknowledgments}
The authors acknowledge the helpful comments by the reviewers. We are in particular grateful to
an anonymous reviewer for providing Example~\ref{ex:fPO} to us.

This work is supported in part by an RGC grant (HKU 17203717E).

\bibliographystyle{plainnat}
\bibliography{AIJ}

\end{document}